\title[LRC via automorphism group]{Construction of optimal locally repairable codes via automorphism groups of rational function fields}
\author{Lingfei Jin}\address{Shanghai Key Laboratory of Intelligent Information Processing, School of Computer Science, Fudan University, Shanghai 200433, China.} \email{lfjin@fudan.edu.cn}
\author{Liming Ma}\address{School of Mathematical Sciences, Yangzhou University, Yangzhou China
225002}\email{lmma@yzu.edu.cn}
\author{Chaoping Xing} \address{Division of Mathematical Sciences, School of Physical \& Mathematical Sciences,
Nanyang Technological University, Singapore
637371}\email{xingcp@ntu.edu.sg}
\thanks{2010 Mathematics Subject Classification:
05B30, 51E22, 11R58, 94B27.}
\date{}
\newtheorem{lemma}{Lemma}[section]
\newtheorem{theorem}[lemma]{Theorem}
\newtheorem{cor}[lemma]{Corollary}
\newtheorem{prop}[lemma]{Proposition}
\newtheorem{ex}[lemma]{Example}
\newtheorem{defn}{Definition}
\theoremstyle{remark}
\newtheorem{rmk}{Remark}
\renewcommand{\epsilon}{\varepsilon}
\renewcommand{\le}{\leqslant}
\renewcommand{\ge}{\geqslant}
\def\PGL{{\rm PGL}}
\def\Gal{{\rm Gal}}
\newcommand{\vnote}[1]{}
\def\ZZ{\mathbb{Z}}
\def\PP{\mathbb{P}}
\def\F{\mathbb{F}}
\def \mA {\mathcal{A}}
\def \mA {\mathcal{A}}
\def \mL {\mathcal{L}}
\def \mP {\mathcal{P}}
\def \Xi {{X^{[i]}}}
\newcommand{\Ga}{\alpha}
\newcommand{\Gb}{\beta}
\newcommand{\Gd}{\delta}     
\newcommand{\Ge}{\epsilon}
    \newcommand{\GL}{\Lambda}
\newcommand{\Gs}{\sigma}     
\newcommand{\Gt}{\tau}
\def\Tr{{\rm Tr}}
\def\GL{{\rm GL}}
\def\PGL{{\rm PGL}}
\def \bc {{\bf c}}
\def\supp {{\rm supp }}
\def\mG{{\mathcal G}}
\def\Aut {{\rm Aut }}
\def\AGL {{\rm AGL}}
\def\PSL {{\rm PSL}}
\def\LRC {{\rm locally repairable code\ }}
\def\LRCs {{\rm locally repairable codes\ }}
\def\Gal{{\rm Gal}}
\begin{document}

\maketitle

\begin{abstract}
Locally repairable codes, or locally recoverable codes (LRC for short) are designed for application in distributed and cloud storage systems. Similar to classical block codes, there is an important bound called the Singleton-type bound for locally repairable codes. In this paper, an optimal locally repairable code refers to a block code achieving this Singleton-type bound. Like classical MDS codes, optimal locally repairable codes carry some very nice combinatorial structures. Since introduction of the Singleton-type bound for locally repairable codes, people have put tremendous effort on constructions of optimal locally repairable codes. Due to hardness of this problem, there are few constructions of optimal locally repairable codes in literature. Most of these constructions are realized via either combinatorial or algebraic structures.

In this paper, we employ automorphism groups of rational function fields to construct  optimal locally repairable codes by considering the group action on the projective lines over finite fields. It turns out that we are able to construct optimal locally repairable codes with reflexibility of locality as well as smaller alphabet size  comparable to the code length. In particular, we produce new families of $q$-ary locally repairable codes, including codes of length $q+1$ via  cyclic groups and codes via dihedral groups.
\end{abstract}

\section{Introduction}
Due to recent applications to distributed and cloud storage systems, a new class of block codes, i.e., locally repairable codes have been investigated by many researchers \cite{BTV17,HL07,HCL,GHSY12,PKLK12,CM13,SRKV13,FY14,PD14,TB14,TPD16}. A locally repairable code is just a  block code with an additional property called {\it locality}. As most of  classical block codes do not carry good locality, people have to investigate various constructions of  block codes with good locality.

Block codes with good locality were initially studied in \cite{HL07,HCL}, although \cite{HCL} considered a slightly different definition of locality, under which a code is said to have information locality.
Codes with information locality property were also studied in \cite{GHSY12,FY14}. For a locally repairable code $C$ of length $n$ with $k$ information symbols and locality $r$ (see the definition of locally repairable codes in Section \ref{subsec:2.1}), it was proved in \cite{GHSY12} that the minimum distance $d(C)$ of $C$ is upper bounded by
 \begin{equation}\label{eq:x1}
 d(C)\le n-k-\left\lceil \frac kr\right\rceil+2.
 \end{equation}
The bound \eqref{eq:x1} is called the Singleton-type bound for locally repairable codes and was proved by extending the arguments in the proof of the classical Singleton bound on codes. Like the classical Singleton bound,  the Singleton-type bound \eqref{eq:x1} does not take into account the cardinality of the code alphabet size $q$. Augmenting this result, a recent work \cite{CM13} established a bound on the distance of locally repairable codes that depends on $q$, sometimes yielding better results. A locally repairable code achieving any of these bound is optimal. However,  in this paper, we specifically refer an optimal locally repairable code to a block code achieving the bound \eqref{eq:x1}.

\subsection{Known results}
Apparently, construction of optimal  locally repairable codes, i.e., block codes achieving the bound \eqref{eq:x1} is of both theoretical interest and practical importance. However, due to hardness of this problem, there are few constructions available in the literature and hence few classes of  optimal  locally repairable codes are known. A class of codes constructed earlier and known as pyramid codes \cite{HCL} were shown to be  optimal locally repairable codes.  In \cite{SRKV13},  Silberstein {\it et al.}  proposed a two-level construction based on the Gabidulin codes combined with a single parity-check $(r+1,r)$ code. Another construction in \cite{TPD16} used two layers of MDS codes, a Reed-Solomon code and a special $(r+1,r)$ MDS code. A common shortcoming of these constructions relates to the size of the code alphabet which in all the papers is an exponential function of the code length, complicating the implementation. There  was an earlier construction of optimal locally repairable codes given in \cite{PKLK12} with  alphabet  size comparable to code length. However, the construction in \cite{PKLK12} only produced  a specific value of the length $n$, i.e., $n=\left\lceil \frac kr\right\rceil(r+1)$. Thus, the rate of the code is very close to $1$.

A recent breakthrough construction was given in \cite{TB14}. This construction naturally generalizes the Reed-Solomon construction which relies on the alphabet of cardinality comparable to the code length $n$. The idea behind the construction is very nice. The only shortcoming of this construction is the restriction on locality $r$. Namely,  $r+1$ must be a divisor of either $q-1$ or $q$, or $r+1$ is equal to some special divisors of $q(q-1)$.

There are also some existence results given in \cite{PKLK12} and \cite{TB14} with less restriction on locality $r$. But both results require large alphabet size which is an exponential function of the code length.

\subsection{Our results and comparison}
In this paper, we present a non-trivial generalization of optimal  locally repairable codes given in \cite{TB14} by employing automorphism groups of rational function fields.  This allows reflexibility of locality as well as smaller alphabet of cardinality comparable to the code length $n$. More precisely, as long as there is a subgroup of size $r+1$ in the projective general linear group $\PGL_2(q)$ (see the definition of $\PGL_2(q)$ in Section \ref{subsec:2.3}) with $(r+1)|n$ and $n\le q+1$, we can construct an optimal locally repairable code of length $n$ and locality $r$. Thus, to construct optimal  locally repairable codes, we need to find all subgroups of $\PGL_2(q)$. Fortunately, subgroups of $\PGL_2(q)$ have been completely determined in literature (see Section \ref{subsec:2.3}). Thus, we are able to obtain $q$-ary  optimal locally repairable codes of length $n$ and locality $r$ as long as there is a subgroup of size $r+1$ in the projective general linear group $\PGL_2(q)$  together with the conditions  $(r+1)|n$ and $n\le q+1$.

However, it is unnecessary to write down all subgroups and provide the corresponding constructions of  optimal locally repairable codes. Instead, we present a general construction for all subgroups of $\PGL_2(q)$ in Section \ref{subsec:2.3}. We then present explicit constructions of  optimal locally repairable codes from subgroups of the affine linear group $\AGL_2(q)$ (see definition of $\AGL_2(q)$ in Section \ref{subsec:2.2}). It turns out that the construction given in \cite{TB14} can be realized by subgroups of  $\AGL_2(q)$ under  our framework.

In addition, we also present an explicit construction from subgroups of size that divides $q+1$. This construction produces optimal locally repairable codes of length $q+1$ that were not known before. Then in the last section, we give optimal locally repairable codes from subgroups of a dihedral group that can be viewed as a subgroup of $\PGL_2(q)$.

\subsection{Organization of the paper}
In Section \ref{sec:2}, we introduce some backgrounds for this paper such as definition of locally repairable codes, automorphism groups of rational function fields, projective general linear groups and their subgroups, Hilbert's ramification  theory and algebraic geometry codes. Section \ref{sec:3} devotes to a general construction of  optimal locally repairable codes from arbitrary subgroups of $\PGL_2(q)$. In Section \ref{sec:4}, we present explicit constructions of optimal locally repairable codes from  subgroups of $\AGL_2(q)$. In Sections 5 and 6, we give explicit constructions of  optimal locally repairable codes from subgroups of a cyclic group of order $q+1$ and dihedral groups, respectively. In the last section, we conclude the paper by summarizing our main results of this paper.

\section{Preliminaries}\label{sec:2}
In this section, we present some preliminaries on locally repairable codes, automorphism groups of rational function fields, the subgroups of projective general linear groups $\PGL_2(q)$, Hilbert's ramification theory and algebraic geometry codes.

\subsection{Locally repairable codes}\label{subsec:2.1}
Informally speaking, a block code is said with locality $r$ if  every coordinate of a given codeword can be recovered by accessing at most $r$ other coordinates of this codeword. The formal definition of a locally repairable code with locality $r$ is given as follows.

\begin{defn}
Let $C\subseteq \F_q^n$ be a $q$-ary block code of length $n$. For each $\Ga\in\F_q$ and $i\in \{1,2,\cdots, n\}$, define $C(i,\Ga):=\{\bc=(c_1,\dots,c_n)\in C\; | \; c_i=\Ga\}$. For a subset $I\subseteq \{1,2,\cdots, n\}\setminus \{i\}$, we denote by $C_{I}(i,\Ga)$ the projection of $C(i,\Ga)$ on $I$.
Then $C$ is called a locally repairable code with locality $r$ if, for every $i\in \{1,2,\cdots, n\}$, there exists a subset
$I_i\subseteq \{1,2,\cdots, n\}\setminus \{i\}$ with $|I_i|\le r$ such that  $C_{I_i}(i,\Ga)$ and $C_{I_i}(i,\Gb)$ are disjoint for any $\Ga\neq \Gb$.
\end{defn}
Apart from the usual parameters: length, rate and minimum distance,  the locality of a   locally repairable code plays a crucial role. In this paper, we always consider locally repairable codes that are linear over $\F_q$. Thus, a $q$-ary \LRC of length $n$, dimension $k$, minimum distance $d$ and locality $r$ is said to be an $[n,k,d]_q$-\LRC with locality $r$.

If we ignore the minimum distance of a $q$-ary locally repairable code, then there is a constraint on the rate \cite{GHSY12}, namely,
\begin{equation}\label{eq:x2}
\frac{k}{n} \le \frac{r}{r+1}.
\end{equation}

In this paper, the minimum distance of a \LRC is taken into consideration and we always refer to the bound \eqref{eq:x1}. For an $[n,k,d]$-linear code, $k$ information symbols can recover the whole codeword. Thus, the locality $r$ is usually upper bounded by $k$. If we allow $r=k$, i.e., there is no constraint on locality, then the bound \eqref{eq:x1} becomes the usual Singleton bound that shows constraint on $n,k$ and $d$ only. The other extreme case is that the locality $r$ is $1$. In this case, the \LRC is a repetition code by repeating each symbol twice and the bound \eqref{eq:x1} becomes $d(C)\le n-2k+2$ which shows  the Singleton bound for repetition codes.

\subsection{Rational function fields and their automorphism groups}\label{subsec:2.2}
Let us introduce some basic facts of rational function fields over finite fields. The reader may refer to \cite{C51,St09} for the details.

Let $F$ be the rational function field $\F_q(x)$ over $\F_q$, where $x$ is transcendental over $\F_q$. First of all, a well-known fact is that every subfield $E$ of $F$ with $\F_q\varsubsetneq E\subseteq F$ is again a rational function field, i.e., $E=\F_q(z)$ for some element $z\in F$ that is transcendental over $\F_q$ (see \cite[Proposition 3.5.9]{St09}).

For every irreducible polynomial $P(x)\in \F_q[x]$, we define a discrete valuation $\nu_P$ which is a map   from $\F_q[x]$ to $\ZZ\cup\{\infty\}$ given by $\nu_P(0)=\infty$ and $\nu_P(f)=a$, where $f$ is a nonzero polynomial and $a$ is the unique nonnegative integer satisfying $P^a|f$ but
$P^{a+1}\nmid f$. This map can be extended to $\F_q(x)$ by defining $\nu_P(f/g)=\nu_P(f)-\nu_P(g)$ for any two polynomials $f, g\in\F_q[x]$ with $g\neq0$.
Apart from the above finite discrete valuation $\nu_P$, we have an infinite valuation $\nu_{\infty}$ (or $\nu_{P_\infty}$) defined by $\nu_{\infty}(f/g)=\deg(g)-\deg(f)$ for any two polynomials $f, g\in\F_q[x]$ with $g\neq0$. Note that we define $\deg(0)=\infty$. The set of places of $F$ is denoted by $\PP_F$.

For each discrete valuation $\nu_P$ ($P$ is either a polynomial or $P_\infty=\infty$), by abuse of notation we still denote by $P$ the set $\{y\in F:\; \nu_P(y)>0\}$. Then the set $P$ is called a place of $F$.
If $P=x-\Ga$, then we denote $P$ by $P_{\Ga}$. The degree of the place $P$ is defined to be the degree of the corresponding polynomial $P(x)$. If $P$ is the infinite place $\infty$, then the degree of $\infty$ is defined to be $1$. A place of degree $1$ is called rational. In fact, there are exactly $q+1$ rational places for the rational function field $F$ over $\F_q$. Moreover, the genus of $F$ is $0$. For a nonzero function $z$ of $F$, the zero divisor of $z$ is defined to be $(z)_0:=\sum_{P\in \PP_F, \nu_P(z)>0}\nu_P(z)P$. Similarly, the pole divisor of $z$ is defined to be $(z)_{\infty}:=-\sum_{P\in \PP_F, \nu_P(z)<0}\nu_P(z)P$.

We denote by  $\Aut(F/\F_q)$ the automorphism group of $F$ over $\F_q$, i.e.,
\begin{equation}
\Aut(F/\F_q)=\{\Gs: F\rightarrow F |\; \Gs  \mbox{ is an } \F_q\mbox{-automorphism of } F\}.
\end{equation}
It is well known that any automorphism $\Gs\in \Aut(F/\F_q)$ is uniquely determined by $\Gs(x)$
and given by
\begin{equation}\label{abcd}
\Gs(x)=\frac{ax+b}{cx+d}
\end{equation}
for some constants $a,b,c,d\in\F_q$ with $ad-bc\neq0$ (see \cite{C51}).
Denote by  $\GL_2(q)$ the general linear group of $2\times 2$ invertible matrices over $\F_q$.
Thus, every matrix $A=\left(\begin{array}{cc}a&b\\ c&d\end{array}\right)\in \GL_2(q)$ induces an automorphism of $F$ given by \eqref{abcd}.
Two matrices of $\GL_2(q)$ induce the same automorphism of $F$ if and only if they belong to the same coset of $Z(\GL_2(q))$, where $Z(\GL_2(q))$ stands for the center $\{aI_2:\; a\in\F_q^*\}$ of $\GL_2(q)$.  This implies that $\Aut(F/\F_q)$ is isomorphic to the projective general linear group $\PGL_2(q):=\GL_2(q)/Z(\GL_2(q))$. Thus, we can identify $\Aut(F/\F_q)$ with $\PGL_2(q)$.

Consider a subgroup of $\PGL_2(q)$:
\begin{equation}\label{eq:x3}
\AGL_2(q):=\left\{\left(\begin{array}{cc}a&b\\ 0&1\end{array}\right):\; a\in\F_q^*,b\in\F_q\right\}.
\end{equation}
The group $\AGL_2(q)$ is called the affine linear group. Every
element $A=\left(\begin{array}{cc}a&b\\ 0&1\end{array}\right)\in \AGL_2(q)$ defines an affine automorphism
$\Gs(x)=ax+b.$

\subsection{The subgroups of $\PGL_2(q)$}\label{subsec:2.3}
As we need subgroups of $\PGL_2(q)$ to construct optimal \LRCs in this paper, we have to find out subgroup structures of $\PGL_2(q)$. It is fortunate that
all subgroup structures of $\PGL_2(q)$ are known. In this subsection, we present these known results.

By an easy counting argument, we know that the order of $\GL_2(q)$ is $(q^2-1)(q^2-q)$. Thus, the order of $\PGL_2(q)$ is $q(q^2-1)$. The elements in $\PGL_2(q)$ with nonzero square determinants in $\F_q^*$ form a subgroup of $\PGL_2(q)$ which is called the projective special linear group and denoted by $\PSL_2(q)$.

The structure of subgroups of $\PSL_2(q)$ was well known and first found by Moore \cite{M04} and Wiman \cite{W99}. The reader may refer to Dickson's book \cite[Chapter XII]{D58} for the details. If char$(\F_q)=2$, then $\PGL_2(q)=\PSL_2(q)$. Hence, we have the following result (see \cite{LS09, SL10}).

\begin{prop} \label{even q}
All the nontrivial subgroups of $\PGL_2(q)$ for $q=2^s$ are listed as follows.
\begin{itemize}
\item [(i)] For each $d|(q\pm 1)$ with $d>2$, there is one conjugacy class of $q(q\mp 1)/2$ cyclic subgroups that are isomorphic to $\mathbb{Z}_d$ of order $d$.
\item [(ii)]  For each $d|(q\pm 1)$ with $d>2$,  there is one conjugacy class of $q(q^2-1)/(2d)$ subgroups that are isomorphic to the dihedral group $D_{2d}$ of order $2d$.
\item [(iii)]  For even $s$,  there is one conjugacy class of $q(q^2-1)/12$ subgroups that are isomorphic to the alternating group $A_4$.
\item [(iv)]  For even $s$,  there is one conjugacy class of $q(q^2-1)/60$ subgroups that are isomorphic to the alternating group $A_5$.
\item [(v)]  If $q$ is a power of $\ell$,  there is one conjugacy class of $\frac{q(q^2-1)}{\ell(\ell^2-1)}$ subgroups that are isomorphic to  $\PGL_2(\ell)$.
\item [(vi)] There are one conjugacy class of $q+1$ elementary abelian subgroups of order $q$, and $(q^2-1)/(p^k-1)$ conjugacy classes of elementary abelian subgroups of order $p^v$ for each positive integer $1\le v\le s-1$, where $k=\gcd(s,v)$.
\item [(vii)]  There are $(q^2-1)p^{s-v}/(p^k-1)$ conjugacy classes of subgroups of order $up^v$ which are semidirect products of an elementary abelian group of order $p^v$ and a cyclic group of order $u$, for every positive integer $1\le v\le s$ and every positive integer $u>2$ such that
$u|(p^k-1)$, where $k=\gcd(s,v)$.
\end{itemize}
Furthermore, the subgroups in (vi) and (vii) are contained in the subgroups that are isomorphic to $\AGL_2(q)$.
\end{prop}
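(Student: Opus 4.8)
The plan is to prove Proposition~\ref{even q} by invoking the classical classification of subgroups of $\PSL_2(q)$ due to Moore and Wiman (as presented in Dickson's book \cite[Chapter XII]{D58}) together with the observation that $\PGL_2(q)=\PSL_2(q)$ when $q=2^s$, so that no extra work is needed to pass from the special linear to the general linear group. First I would recall that $|\PGL_2(q)|=q(q^2-1)$, which factors as $2^s\cdot(2^s-1)(2^s+1)$ with $2^s-1$ and $2^s+1$ coprime and both odd; this arithmetic controls which cyclic and dihedral orders can occur and is the source of the divisibility conditions $d\mid(q\pm1)$. Then I would go down the Dickson list, which states that every subgroup of $\PSL_2(q)$ (for $q$ even) is one of: a $p$-group (elementary abelian, here $p=2$), a cyclic group of order dividing $q\pm1$, a dihedral group of order $2d$ with $d\mid q\pm1$, a semidirect product of an elementary abelian $2$-group with a cyclic group (a Frobenius-type group sitting in a Borel subgroup), the groups $A_4$, $S_4$, $A_5$, or a subgroup $\PGL_2(\ell)$ (equivalently $\PSL_2(\ell)$) for $\ell$ a power of the characteristic with $\ell^s$-type divisibility. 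In characteristic $2$ one has $S_4\cong\PGL_2(3)$, so the $S_4$ case is absorbed into case (v) with $\ell=3$; and $A_4\cong\PSL_2(3)$ is listed separately in (iii) while $A_5\cong\PSL_2(4)\cong\PGL_2(4)$ appears in (iv) — I would note these coincidences explicitly to explain why the list reads as it does.

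The second task is to count conjugacy classes and the size of each class, which is the combinatorial heart of the statement. For the cyclic subgroups of order $d\mid(q+1)$, I would use the fact that they are the subgroups of the non-split torus (a cyclic group of order $q+1$), that all such tori are conjugate, and that the normalizer of such a torus is a dihedral group of order $2(q+1)$; hence the number of cyclic subgroups of order $d>2$ in a fixed torus times the index of the torus normalizer gives the count $q(q-1)/2$, and similarly $q(q+1)/2$ for $d\mid(q-1)$ using the split torus of order $q-1$ whose normalizer has order $2(q-1)$. For dihedral subgroups $D_{2d}$ one counts via the normalizer of $D_{2d}$ inside $\PGL_2(q)$, giving index $q(q^2-1)/(2d)$; I would need to check the normalizer is exactly $D_{2d}$ itself when $d>2$ (for small $d$ there are exceptional fusions, which is why the hypothesis $d>2$ appears). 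For $A_4$, $A_5$, and $\PGL_2(\ell)$ the class sizes $q(q^2-1)/12$, $q(q^2-1)/60$, $q(q^2-1)/(\ell(\ell^2-1))$ are just $|\PGL_2(q)|$ divided by the order of the normalizer, and each of these subgroups is known to be self-normalizing in characteristic $2$ (again a fact I would cite from Dickson / Suzuki rather than reprove). The $p$-subgroup counts in (vi) and (vii), and the final sentence that these lie inside a conjugate of $\AGL_2(q)$, follow from Sylow theory together with the structure of the Borel subgroup $B$ of $\PGL_2(q)$: $B$ has order $q(q-1)$, is a semidirect product of the unipotent radical $U\cong(\F_q,+)$ (which is elementary abelian of order $q=2^s$) with a cyclic complement of order $q-1$, there are exactly $q+1$ Borel subgroups (one per point of $\PP^1(\F_q)$), and every subgroup of $\PGL_2(q)$ whose order is divisible by $p$ but which is not one of the listed non-$p$-solvable groups is contained in a Borel; the number of elementary abelian subgroups of order $p^v$ inside $U$ and the number of their $\PGL_2(q)$-conjugacy classes are then counted by the standard formula for $\F_2$-subspaces of $\F_{2^s}$ up to the action of the multiplicative stabilizer, yielding the $\gcd(s,v)$ that appears.

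The main obstacle I anticipate is not any single deep step but the bookkeeping in the class-number count: one must carefully compute the normalizer of each representative subgroup in $\PGL_2(q)$ (not merely in $\PSL_2(q)$, though here they coincide), handle the small-order exceptions that force the restrictions $d>2$ and $u>2$, and correctly separate "number of conjugacy classes" from "number of subgroups in one class" — for the tori-related families all subgroups of a given isomorphism type form a single class, whereas for the elementary abelian and Frobenius-type families there are several classes indexed by how the subgroup sits inside $U$ relative to the $\F_{2^k}$-subfield structure. Since all of this is classical and documented in \cite{D58}, \cite{LS09} and \cite{SL10}, the "proof" is essentially a guided reading of those sources with the characteristic-$2$ simplification $\PGL_2(q)=\PSL_2(q)$ made explicit; I would therefore keep the exposition short, cite the classification, and spell out only the arithmetic identities ($|\PGL_2(q)|=q(q^2-1)$, the torus and Borel orders, and the subspace-counting formula) needed to recover the displayed numbers.
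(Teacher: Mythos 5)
The paper does not actually prove Proposition~\ref{even q}: it is quoted from the literature, the only argument offered being the remark that $\PGL_2(q)=\PSL_2(q)$ in characteristic $2$, together with references to Moore, Wiman and Dickson \cite{M04,W99,D58} for the classification and to \cite{LS09,SL10} for the conjugacy-class counts. Your proposal follows the same route (invoke Dickson's list for $\PSL_2(2^s)$, then recover class sizes from normalizers of the split and non-split tori, of the Borel subgroup, and of the exceptional subgroups), so in spirit it matches the paper, and the normalizer bookkeeping you sketch is essentially what the cited sources carry out.

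There is, however, one concretely wrong step in your completeness argument: you claim that Dickson's list for even $q$ contains $S_4$ and that ``the $S_4$ case is absorbed into case (v) with $\ell=3$.'' Case (v) applies only when $q$ is a power of $\ell$, which is impossible for $\ell=3$ and $q=2^s$, so nothing in the stated list would account for $S_4$; if $S_4$ actually occurred as a subgroup, the proposition would be false as stated. The correct point is that $S_4$ does not embed in $\PGL_2(2^s)$ at all: its Sylow $2$-subgroups are dihedral of order $8$, whereas every $2$-subgroup of $\PGL_2(2^s)$ lies in a conjugate of the unipotent group $U\simeq(\F_q,+)$ and is therefore elementary abelian. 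Dickson's classification in even characteristic simply omits $S_4$, so no reconciliation with case (v) is needed. Relatedly, your sketch leaves the ``even $s$'' restrictions in (iii) and (iv) unexplained: $A_4$ requires an element of order $3$ normalizing a Klein four-group, and since such a four-group lies in a unique Sylow $2$-subgroup this forces $3\mid q-1$, i.e.\ $s$ even; and $A_5$ requires $5\mid q^2-1$, which for $q=2^s$ again forces $s$ even. With these corrections your citation-based argument is consistent with the paper's (purely referential) treatment.
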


Let $q=p^s$ be a power of an odd prime $p$ and let $\Ge$ be $1$ or $-1$ which is determined by $q\equiv \Ge$ (mod $4$).
The projective general linear group $\PGL_2(q)$ is a subgroup of $\PSL_2(q^2)$ and has a unique normal subgroup $\PSL_2(q)$ of index $2$.
Hence, the subgroup structures of $\PGL_2(q)$ can be determined (see \cite{COR06,LS09}).

\begin{prop} \label{odd q}
All the nontrivial subgroups of $\PGL_2(q)$ with $q=p^s$ for any odd prime $p$ are listed as follows.
\begin{itemize}
\item [(i)] There are two conjugacy classes of cyclic subgroups of order $2$.% one class consisting of $q(q+\Ge)/2$ subgroups  $\mathbb{Z}_2$ contained in the subgroup $\PSL_2(q)$, the other of  $q(q-\Ge)/2$ subgroups  $\mathbb{Z}_2$ not contained in  $\PSL_2(q)$.
\item [(ii)] There is one conjugacy class of $q(q\mp \Ge)/2$ cyclic subgroups  of size $d$ for every divisor $d>2$ of $(q\pm \Ge)$.
\item [(iii)] There are two conjugacy classes of $q(q^2-1)/6$  subgroups that are isomorphic to the dihedral group $D_4$ (or Klein $4$-group).%: one class consisting of $q(q^2-1)/24$ subgroups $D_4$ contained in the subgroup $\PSL_2(q)$, the other of $q(q^2-1)/8$ subgroups $D_4$ not contained in $\PSL_2(q)$.
\item [(iv)]  There are two conjugacy classes of $q(q^2-1)/(2d)$ subgroups that are isomorphic to the dihedral group  $D_{2d}$ of order $2d$  for every divisor $d>2$ of $\frac{q\pm \Ge}{2}$.% and $d>2$: one class consisting of $q(q^2-1)/(4d)$ subgroups $D_{2d}$ contained in the subgroup $\PSL_2(q)$, the other of $q(q^2-1)/(4d)$  subgroups $D_{2d}$ not contained in $\PSL_2(q)$.
\item [(v)]  There is one conjugacy class of  $q(q^2-1)/(2d)$ subgroups that are isomorphic to the dihedral group $D_{2d}$ of order $2d$  for every divisor $d>2$ of $(q\pm \Ge)$ such that $(q\pm \Ge)/d$ is an odd integer.
\item [(vi)] There are  $q(q^2-1)/24$ subgroups isomorphic to $A_4$; $q(q^2-1)/24$ subgroups isomorphic to $S_4$, and $q(q^2-1)/60$ subgroups isomorphic to $A_5$ when $q\equiv \pm 1(\text{mod } 10)$. %There is  one conjugacy class of each of these types of subgroups and all subgroups lie in the subgroup $\PSL_2(q)$ except for those of type $S_4$ when  $q\equiv \pm 3(\text{mod } 8)$.
\item [(vii)] There is one  conjugacy class of $\frac{q(q^2-1)}{\ell(\ell^2-1)}$ subgroups that are isomorphic to $\PSL_2(\ell)$,  where $q$ is a power of $\ell$.
\item [(viii)] $\PGL_2(q)$  contains the subgroups $\PGL_2(\ell)$, where $q$ is a power of $\ell$.
\item [(ix)]  $\PGL_2(q)$ contains the elementary abelian subgroups of order $p^v$ for $1\le v\le s$.
\item [(x)] $\PGL_2(q)$  contains the subgroups of order $up^v$ that are semidirect products of an elementary abelian subgroup of order $p^v$ with $1\le v\le s$ and a cyclic subgroup of order $u>1$ for every pair $(u,v)$ satisfying  $u|\gcd(q-1,p^v-1)$.
\end{itemize}
\end{prop}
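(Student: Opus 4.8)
The statement is the classical theorem of Dickson, and my plan is to derive it from the classification of subgroups of $\PSL_2(q)$ recalled above (see \cite[Chapter XII]{D58}) together with the two facts recorded just before the statement: that $\PSL_2(q)$ is a normal subgroup of index $2$ in $\PGL_2(q)$, and that $\PGL_2(q)$ embeds into $\PSL_2(q^2)$. First I would take an arbitrary subgroup $H\le \PGL_2(q)$ and put $H_0:=H\cap \PSL_2(q)$; since $\PSL_2(q)$ is normal of index $2$, either $H=H_0\le \PSL_2(q)$, or $[H:H_0]=2$ and $H=\langle H_0,t\rangle$ for some $t\in \PGL_2(q)\setminus \PSL_2(q)$ with $t^2\in H_0$. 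Because $H\le \PSL_2(q^2)$, in all cases the abstract isomorphism type of $H$ already lies on Dickson's short list (cyclic, dihedral, $A_4$, $S_4$, $A_5$, $\PSL_2(\ell)$ or $\PGL_2(\ell)$ for a subfield $\F_\ell\subseteq\F_q$, an elementary abelian $p$-group, or a semidirect product of such a group by a cyclic group), so the work is to decide which of these occur in $\PGL_2(q)$, with which numerical invariants, and in how many conjugacy classes.

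In the first case ($H\le\PSL_2(q)$) I would simply invoke Dickson's list for $\PSL_2(q)$. Since $q$ is odd, $|\PSL_2(q)|=q(q^2-1)/2$, the cyclic subgroups lying in a maximal split (resp.\ nonsplit) torus of $\PSL_2(q)$ have order dividing $(q-1)/2$ (resp.\ $(q+1)/2$), and the unipotent subgroups have exponent $p$. The condition $q\equiv\Ge\pmod4$ forces exactly one of $q-\Ge,q+\Ge$ to be divisible by $4$ and the other to be twice an odd number, which is precisely what determines, for a prescribed order $d$, whether the corresponding cyclic or dihedral subgroup must sit over the split torus or the nonsplit torus; this is the source of the $q\pm\Ge$ bookkeeping in items (ii), (iv), (v). The exceptional and $p$-subgroups contributed by $\PSL_2(q)$ --- the $A_4,S_4,A_5\cong\PSL_2(5)$, the $\PSL_2(\ell)$, the elementary abelian $p$-groups and their affine extensions --- are transcribed directly from Dickson, the existence of $A_5$ requiring $5\mid q(q^2-1)$, i.e.\ $q\equiv\pm1\pmod{10}$ for odd $q$.

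In the second case ($[H:H_0]=2$, $H\not\subseteq\PSL_2(q)$) I would control the orders of elements of $\PGL_2(q)\setminus\PSL_2(q)$ by a direct eigenvalue computation in $\GL_2(q)$: a nonscalar matrix $A$ is, up to a scalar factor, split-semisimple (the order of its image equals the order of the ratio of its eigenvalues, a divisor of $q-1$, and $\det A$ is the product of the eigenvalues), nonsplit-semisimple (order a divisor of $q+1$, and $\det A$ is the norm of an eigenvalue), or unipotent (order $p$, and $\det A$ is a square, hence $A\in\PSL_2(q)$ automatically). Reading off which abstract index-$2$ extensions of the Dickson groups can be built with elements of the allowed determinants and orders yields exactly: cyclic subgroups of \emph{every} order $d\mid q\pm1$ (not merely $d\mid(q\pm\Ge)/2$), dihedral groups $D_{2d}$ with $d\mid q\pm1$, the group $S_4$ as an index-$2$ overgroup of $A_4$, the groups $\PGL_2(\ell)$, and the full affine-type groups $\ZZ_{p^v}\rtimes\ZZ_u$ with $u\mid\gcd(q-1,p^v-1)$; the involutions lying outside $\PSL_2(q)$ in particular account for the second conjugacy class of order-$2$ subgroups in item (i). One must also verify the converse --- that no other abstract type admits a legitimate index-$2$ extension inside $\PGL_2(q)$ (for instance $A_5$ has no available overgroup of index $2$ here, and $\PSL_2(\ell)$ extends at most to $\PGL_2(\ell)$) --- which is again a finite check against the order list.

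Finally I would count conjugacy classes. For each isomorphism type the number of subgroups in one $\PGL_2(q)$-class is $|\PGL_2(q)|/|N_{\PGL_2(q)}(H)|$, so the task is to identify $N_{\PGL_2(q)}(H)$ and then resolve fusion: for $K\trianglelefteq G$ of index $2$, a $G$-class of subgroups either coincides with a single $K$-class (when $N_G(H)\not\subseteq K$) or splits into two $K$-classes while remaining one $G$-class (when $N_G(H)\subseteq K$), and two inequivalent configurations of the same type inside $\PGL_2(q)$ can force genuinely two $\PGL_2(q)$-classes; this alternative is exactly what produces the ``one conjugacy class'' versus ``two conjugacy classes'' dichotomy in items (i), (iii), (iv), (v), (vi). The cleaner normaliser computations --- a cyclic $\ZZ_d$ with $d>2$ coprime to $p$ lies in a unique maximal torus whose normaliser is dihedral of order $2(q-\Ge)$ or $2(q+\Ge)$ and already normalises $\ZZ_d$; $N_{\PGL_2(q)}(A_4)=N_{\PGL_2(q)}(S_4)=S_4$ and $N_{\PGL_2(q)}(A_5)=A_5$ --- give at once the counts $q(q\pm\Ge)/2$, $q(q^2-1)/24$ and $q(q^2-1)/60$, while the dihedral families $D_{2d}$ and the Klein-four case require a more careful analysis of how the reflections are distributed relative to $\PSL_2(q)$, which is where the counts $q(q^2-1)/(2d)$ and $q(q^2-1)/6$ come from. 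The main obstacle is not conceptual but organisational: Dickson's theorem and elementary $2\times2$ linear algebra over $\F_q$ supply everything, but one has to push the normaliser and class-fusion bookkeeping uniformly and carefully through every listed family, tracking the parity subtleties carried by $\Ge$.
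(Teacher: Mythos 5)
There is nothing in the paper to compare your argument against: the paper states this proposition as a known classification and gives no proof at all, referring for $\PSL_2(q)$ to Moore, Wiman and Dickson \cite{M04,W99,D58} and for the passage to $\PGL_2(q)$ to \cite{COR06,LS09}. Your outline is in fact the standard derivation used in those references: intersect an arbitrary subgroup $H\le\PGL_2(q)$ with the normal index-$2$ subgroup $\PSL_2(q)$, use the embedding $\PGL_2(q)\le\PSL_2(q^2)$ to confine the abstract type of $H$ to Dickson's list, analyse which index-$2$ extensions can be realised by elements of $\PGL_2(q)\setminus\PSL_2(q)$ via the split/nonsplit/unipotent trichotomy, and then count $\PGL_2(q)$-classes by normalisers and by fusion of $\PSL_2(q)$-classes. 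As a strategy this is sound and consistent with the cited literature, so I see no wrong step.

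Do be aware, however, that what you have written is a plan rather than a proof: the parts you explicitly defer --- the normaliser and fusion bookkeeping for the dihedral families and the Klein four-group, and in particular the parity condition ``$(q\pm\Ge)/d$ odd'' that separates item (v) (one class) from item (iv) (two classes), as well as the distribution of the reflections of $D_{2d}$ relative to $\PSL_2(q)$ --- are precisely the delicate points where the $\Ge$-dependence enters, and they are the substance of the classification in \cite{COR06,LS09}. Since the paper itself treats the proposition as quotable background (only the existence of a subgroup of each order $r+1$ is ever used later), deferring these details to the cited sources is acceptable; just do not present the sketch as a self-contained proof without carrying out that counting.
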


\subsection{Hilbert's ramification theory}
For any subgroup $\mG$ of $\Aut(F/\F_q)$, let $F^\mG$ be the fixed subfield of $F$ with respect to $\mG$, i.e.,
\begin{equation}\label{eq:7}
F^\mG=\{y\in F:\; \Gs(y)=y\ \mbox{for all $\Gs\in\mG$}\}.
\end{equation}
By the Galois theory, the extension $F/F^\mG$ is a Galois extension with the Galois group $\Gal(F/F^\mG)=\mG$ (see \cite[Theorem 11.36]{HKT08}).

From the Hurwitz genus formula (see \cite[Theorem 3.4.13]{St09}), one has
$$2g(F)-2=|{\mathcal G}|\cdot [2g(F^{\mathcal G})-2]+\deg \text{ Diff}(F/F^{\mathcal G}),$$
where $g(F)$ stands for the genus of $F$ and  $\text{Diff}(F/F^{\mathcal G})$ is the different of $F/F^{\mathcal G}$.
Let $Q$ be a place of $F^{\mG}$ and let $P_1, \cdots, P_t$ be all the places of $F$ lying over $P$.
The Galois group $\mG$ acts transitively on the set $\{P_1, \cdots, P_t\}$ of extensions of  $Q$ in $F$ (see \cite[Theorem 3.7.1]{St09}).
Then the ramification indices $e_{P_i}(F/F^{\mG})$ (resp. different exponents $d_{P_i}(F/F^{\mG})$) of $P_i|Q$ are independent of $P_i$ for $1\le i\le t$, and we can denote them by $e_Q(F/F^{\mG})$ (resp. $d_Q(F/F^{\mG})$).
Hence, the degree of the different of $F/F^{\mathcal G}$ can be given by
$$\deg \text{Diff}(F/F^{\mathcal G})=\sum_{P\in  \PP_F} d_P(F/F^{\mG})\deg(P)=\sum_{Q\in\PP_{F^{\mG}}}  \frac{d_Q(F/F^{\mG})}{e_Q(F/F^{\mG})}\cdot |\mG| \cdot \deg(Q).$$

The $i$-th ramification group ${\mathcal G}_i(P)$ of $P|Q$ for each $i\ge -1$ is defined by
$${\mathcal G}_i(P)=\{ \sigma\in {\mathcal G}:\; v_P(\sigma(z)-z) \ge i+1 \text{ for all } z\in \mathcal{O}_P\},$$
where $\mathcal{O}_P$ is the integral ring of $P$ in $F$ and $v_P$ is the normalized discrete valuation of $F$ corresponding to $P$.
In fact, the order of $\mG_1(P)$ is a power of char($\F_q$), $\mG_0(P)/\mG_1(P)$ is a cyclic group whose order is relatively prime to char($\F_q$), $\mG_{i+1}(P)$ is a normal subgroup of $\mG_i$ for each $i\ge 1$,  and there exists an integer $a$ such that $\mG_i(P)=\{\mbox{id}\}$ for all $i\ge a$ (see \cite[Proposition 3.8.5]{St09}).
Moreover, the different exponent $d_Q(F/F^{\mG})$ is
$$d_Q(F/F^{\mG})=\sum_{i=0}^{\infty} \big{(} |{\mathcal G}_i(P)|-1\big{)}$$
from  Hilbert's Different Theorem \cite[Theorem 3.8.7]{St09}.  Furthermore, we have the following ramification structures for rational function fields (see \cite[Theorem 11.92]{HKT08}).

\begin{prop}\label{prop:2.3}
Let $F$ be the rational function field $\F_q(x)$ over $\F_q$.
Denote  by $\mA$ the automorphism group $\Aut(F/\F_q)$, then the extension $F/F^{\mA}$ has the following properties.
\begin{itemize}
\item [(i)] Every rational place $P$ of $F$ is wildly ramified in $F/F^{\mA}$; $\mG_{-1}(P)=\mG_0(P)$ is the semidirect product of an elementary abelian $p$-group of order $q$ with a cyclic group of order $q-1$; $\mG_1(P)$ is  an elementary abelian $p$-group of order $q$ and $\mG_i(P)$ is trivial for any $i\ge 2$. Furthermore, all rational places form a single orbit of size $q+1$ under the group action of  $\mA$ on the set of places of $F$.
\item [(ii)] Every place $Q$ of degree $2$ of $F$ is tamely ramified in $F/F^{\mA}$; $\mG_0(Q)$ is a cyclic group of order $q+1$. Furthermore, all places of degree $2$ form a single orbit of size $(q^2-q)/2$ under the group action of  $\mA$ on the set of places of $F$.
\item [(iii)]  Other places of $F$ are unramified in $F/F^{\mA}$.
\end{itemize}
\end{prop}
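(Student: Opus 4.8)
The plan is to work inside the Galois extension $F/F^{\mA}$, where $\mA\cong\PGL_2(q)$ has order $q(q^2-1)$ and $F^{\mA}$ is again a rational function field, hence of genus $0$. The organising principle is threefold: a place of $F$ of degree $m$ is a $\Gal(\overline{\F}_q/\F_q)$-orbit of size $m$ on $\PP^1(\overline{\F}_q)$; the decomposition group $\mG_{-1}(P)$ is exactly the stabiliser of $P$ in $\mA$ acting by M\"obius transformations; and $\Gs\in\mG_{-1}(P)$ lies in the inertia group $\mG_0(P)$ iff the automorphism $\Gs$ induces on the residue field $F_P$ is trivial. With this in place I would treat the three orbit types in turn, compute the ramification filtration at a rational place from an explicit local parameter, and finish with a Hurwitz-formula check that no further ramification is possible.

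\emph{Rational places.} Since $\mA$ acts transitively on the $q+1$ rational places of $F$, they form a single orbit, so $ef=q(q-1)$ for each. The residue field of a rational place is $\F_q$, so every element of its stabiliser acts trivially on it; hence $\mG_{-1}(P)=\mG_0(P)$ equals the full stabiliser. For $P=\Pin$ this is $\{x\mapsto ax+b:\ a\in\F_q^*,\ b\in\F_q\}\cong\AGL_2(q)$, the semidirect product of the translation subgroup $\{x\mapsto x+b\}$ (elementary abelian of order $q$) by $\{x\mapsto ax\}$ (cyclic of order $q-1$), and the translation subgroup, being the Sylow $p$-subgroup, is $\mG_1(P)$. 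To locate the higher groups I would use the local parameter $t=1/x$ at $\Pin$: for $\Gs:x\mapsto x+b$ with $b\ne 0$ one gets $\Gs(t)-t=-b/(x^2+bx)$, so $\nu_{\Pin}(\Gs(t)-t)=2$. Hence every nontrivial element of the wild inertia lies in $\mG_1(P)\setminus\mG_2(P)$, which forces $\mG_2(P)=\{\mathrm{id}\}$ and therefore $\mG_i(P)=\{\mathrm{id}\}$ for all $i\ge 2$.

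\emph{Places of degree $2$ and of degree $\ge 3$.} A degree-$2$ place $Q$ corresponds to a conjugate pair $\{\Ga,\Ga^q\}$ with $\Ga\in\F_{q^2}\setminus\F_q$, and there are $(q^2-q)/2$ such pairs. An element of $\mA$ lies in the inertia group at $Q$ iff, viewed as a M\"obius map over $\overline{\F}_q$, it fixes $\Ga$ (hence also $\Ga^q$); the subgroup of $\PGL_2(q)$ fixing such a pair pointwise is a non-split torus, cyclic of order $q+1$. Since $p\nmid q+1$ the ramification is tame, so $\mG_0(Q)$ is this cyclic group of order $q+1$; as the residue field of $Q$ is $\F_{q^2}$ one gets $f=2$ and $e=q+1$. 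Transitivity on degree-$2$ places follows from the bijection between them and the non-split tori (a cyclic subgroup of $\PGL_2(q)$ of order $q+1>2$ has exactly two fixed points on $\PP^1(\overline{\F}_q)$, which necessarily form a conjugate pair), together with the fact, from Propositions \ref{even q} and \ref{odd q}, that the non-split tori form a single conjugacy class; the orbit count $q(q^2-1)/(2(q+1))=(q^2-q)/2$ also matches. For a place $P$ of degree $m\ge 3$, defined by a monic irreducible polynomial with root $\xi\in\overline{\F}_q$, any element of the inertia group at $P$ would fix $\xi$ and hence all $m\ge 3$ of its $\F_q$-conjugates; but a nontrivial M\"obius transformation fixes at most two points of $\PP^1(\overline{\F}_q)$, so the inertia group is trivial and $P$ is unramified.

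\emph{Consistency check, and the main obstacle.} Finally, with $g(F)=g(F^{\mA})=0$ the Hurwitz genus formula gives $\deg\mathrm{Diff}(F/F^{\mA})=2|\mA|-2=2q(q^2-1)-2$, and the different exponents just found saturate this: the $q+1$ rational places contribute $(q+1)\bigl((q(q-1)-1)+(q-1)\bigr)=(q+1)(q^2-2)$ and the $(q^2-q)/2$ degree-$2$ places contribute $\tfrac{q^2-q}{2}\cdot 2q=q^3-q^2$, for a total of $2q^3-2q-2=2|\mA|-2$; this re-proves that nothing else ramifies. I expect the degree-$2$ case to be the main obstacle: one must simultaneously identify the inertia group with the non-split torus and pin down the residue degree $f=2$, i.e. carefully separate the decomposition group (the dihedral normaliser of the torus, of order $2(q+1)$) from the inertia group (the cyclic torus of order $q+1$), and this identification leans on the subgroup classification of $\PGL_2(q)$.
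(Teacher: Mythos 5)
Your proposal is correct in substance, but note that the paper does not actually prove Proposition \ref{prop:2.3}: it simply cites \cite[Theorem 11.92]{HKT08}, so your argument is a self-contained route rather than a reconstruction of an in-paper proof. What you do --- identify $\mG_{-1}(P)$ with the stabiliser of $P$ under the $\PGL_2(q)$-action, pass to $\mG_0(P)$ via the induced action on the residue field, compute the filtration at $\Pin$ with the local parameter $t=1/x$, handle degree-$2$ places through the non-split torus, kill inertia at places of degree at least $3$ by the three-fixed-points argument, and close with the different/Hurwitz bookkeeping $(q+1)(q^2-2)+q^3-q^2=2q(q^2-1)-2$ --- is essentially the standard proof of the cited theorem, and your computations check out. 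The gain is independence from \cite{HKT08}; the cost is that a few asserted facts still need a one-line computation or a reference: (a) the criterion $\Gs\in\mG_i(P)\Leftrightarrow \nu_P(\Gs(t)-t)\ge i+1$ for a single local parameter $t$ (the paper's definition quantifies over all $z\in\mathcal{O}_P$; cite \cite[Proposition 3.8.5]{St09} or justify it); (b) that the pointwise stabiliser of a quadratic pair $\{\Ga,\Ga^q\}$ has order exactly $q+1$ and is cyclic --- this follows from the short computation $c\Ga^2+(d-a)\Ga-b=0$, which exhibits the stabiliser as $\F_{q^2}^*/\F_q^*$, and once you have it, transitivity on degree-$2$ places follows directly from orbit--stabiliser (orbit size $q(q^2-1)/(q+1)=q^2-q$ on $\PP^1(\F_{q^2})\setminus\PP^1(\F_q)$), making the appeal to the conjugacy classification in Propositions \ref{even q} and \ref{odd q} avoidable; and (c) in your torus-to-place correspondence, the claim that the two fixed points of the torus cannot both be rational needs the remark that the pointwise stabiliser of two rational points has order $q-1<q+1$. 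None of these is a genuine gap, only places where the sketch should be firmed up.
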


In fact, one can explicitly compute all the ramification groups of rational places and places of degree $2$. For instance, let $P_{\infty}$ be the pole of $x$. Then  $\mG_{-1}(P_{\infty})=\mG_0(P_{\infty})=\AGL_2(q)$ and $\mG_1(P_{\infty})$ is isomorphic to the group $\left\{\left(\begin{array}{cc}1&b\\ 0&1\end{array}\right):\; b\in\F_q\right\}\simeq (\F_q,+)$.

\subsection{Algebraic geometry codes} \label{AGcode}
In this subsection, we introduce a modification of the algebraic geometry codes. The reader may refer to \cite{NX01,TV91,TVN90} for details on algebraic geometry codes.

Let $F/\F_q$ be the rational function field. Let $\mP=\{P_1,\dots,P_n\}$ be a set of $n$ distinct rational places of $F$. For a divisor $G$ with $0<\deg(G)<n$ and $\supp(G)\cap\mP=\emptyset$, the algebraic geometry code is defined to be
\begin{equation}\label{eq:8}
C(\mP,G):=\{(f(P_1),\dots,f(P_n)): \; f\in\mL(G)\}.
\end{equation}
Then $C(\mP,G)$ is an $[n,\deg(G)+1,n-\deg(G)]$-MDS code.
If $V$ is a subspace of $\mL(G)$, we can define a subcode of $C(\mP,G)$ by
\begin{equation}\label{eq:9}
C(\mP,V):=\{(f(P_1),\dots,f(P_n)):\; f\in  V\}.
\end{equation}
The code $C(\mP,V)$ is usually no longer an MDS code, but the minimum distance is still lower bounded by $n-\deg(G)$.

For the purpose of our paper, we need to  modify the above construction. We can remove the condition that $\supp(G)\cap\mP=\emptyset$.
Assume that $G$ is a divisor of $F$ and let $m_i=\nu_{P_i}(G)$. Choose a local parameter $\pi_i$ of $P_i$ for each $i\in \{1,2,\cdots, n\}$.
Then for any nonzero function $f\in\mL(G)$, we have $\nu_{P_i}(\pi_i^{m_i}f)= m_i+\nu_{P_i}(f)\ge m_i-\nu_{P_i}(G)=0$.
Define a modified algebraic geometry code as follows
\begin{equation}\label{eq:10}
C(\mP,G):=\{((\pi_1^{m_1}f)(P_1),\dots,(\pi_n^{m_n}f)(P_n)):\; f\in\mL(G)\}.
\end{equation}
We claim that $C(\mP,G)$ is still an $[n,\deg(G)+1,n-\deg(G)]$-MDS code in this case.
It is sufficient to show that the Hamming weight of the codeword $$\big{(}(\pi_1^{m_1}f)(P_1),\dots,(\pi_n^{m_n}f)(P_n)\big{)}$$ is at least $n-\deg(G)$ for every nonzero function $f\in\mL(G)$.
Let $I$ be the subset of $\{1,2,\dots,n\}$ such that $(\pi_i^{m_i}f)(P_i)=0$.
Then we have $f\in\mL(G-\sum_{i\in I}P_i)$. Thus, $\deg(G-\sum_{i\in I}P_i)\ge 0$, i.e., $|I|\le \deg(G)$. This gives a lower bound  $n-|I|\ge n-\deg(G)$ on the Hamming weight of the codeword.
Now, for a subspace $V$ of $\mL(G)$, we can define a subcode of $C(\mP,G)$ by
\begin{equation}\label{eq:11}
C(\mP,V):=\{((\pi_1^{m_1}f)(P_1),\dots,(\pi_n^{m_n}f)(P_n)):\; f\in V\}.
\end{equation}
Again, the minimum distance of $C(\mP,V)$ is  lower bounded by $n-\deg(G)$.

\section{General construction of optimal LRC codes}\label{sec:3}
The idea of our construction works as follows. Let $F/\F_q$ be the rational function field.
Let $\mG$ be a subgroup of $\Aut(F/\F_q)$ of order $r+1$. Then there is a subfield $E$ of $F$ such that $F/E$ is a Galois extension with the Galois group $\Gal(F/E)=\mG$. Now assume that $Q_1,Q_2,\dots, Q_m$ are rational places of $E$ and they are all splitting completely in $F/E$. Let $P_{i,1},P_{i,2},\dots,P_{i,r+1}$ denote the $r+1$ rational places of $F$ that lie over $Q_i$ for all $1\le i\le m$. Put $n=(r+1)m$ and put $\mP=\{P_{ij}\}_{1\le i\le m,1\le j\le r+1}$. Choose a divisor $G$ of degree $k+t-2$ with $k=rt$ for some integer $1\le t\le m$, a function $z$ of $E$ with $\deg(z)_{\infty}=1$ (here $(z)_{\infty}$ is considered as a divisor of $E$) and a function  $x$  of $F$ with $\deg(x)_{\infty}=1$. Then the code $\{(f(P))_{P\in\mP}:\; f\in \sum_{i=0}^{r-1}\left(\sum_{j=0}^{t-1}a_{ij}z^j\right)x^i;\; a_{ij}\in\F_q\}$ is an optimal $q$-ary $[n,k,d]$-\LRC with locality $r$,  $k=rt$ and $d=n-k-\frac kr+2$ (see the proof of Theorem \ref{prop:3.1} below). Apparently, to construct optimal locally repairable codes using this idea, one has to analyze the splitting behavior of places of $E$ in $F$. In order to do so, we need to have explicit structure of the subgroup $\mG$. As all subgroup structures are given in  Propositions \ref{even q} and \ref{odd q}, we are able to produce  optimal locally repairable codes from an arbitrary subgroup of $\PGL_2(q)$. On the other hand, it is unnecessary to provide explicit constructions from all subgroups of $\PGL_2(q)$ one by one. Instead, in this section, we give a general construction of optimal locally repairable codes for all the subgroups of automorphism groups of rational function fields over finite fields by estimating the number of ramified rational places in $F$. Explicit constructions corresponding to subgroups of the affine linear group $\AGL_2(q)$, the cyclic group of size $q+1$ and the dihedral groups $D_{2(q\pm 1)}$ for even $q$  are provided in the following sections.

\begin{prop}\label{prop:3.1} Assume that there is a subgroup of $\PGL_2(q)$ of order $r+1$ with $1\le r\le \frac{q}{3}$.
 Put $n=m(r+1)$ for any positive integer $m\le \left\lfloor\frac{q+1-2r}{r+1}\right\rfloor$. Then, for any integer $t$ with  $1\le t\le m$, there exists an optimal $q$-ary $[n,k,d]$-\LRC with locality $r$,  $k=rt$ and $d=n-k-\frac kr+2$.
\end{prop}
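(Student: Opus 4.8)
The plan is to construct the code explicitly following the recipe sketched just before the statement, and then verify the four required properties: locality $r$, length $n$, dimension $k=rt$, and minimum distance $d=n-k-k/r+2$. First I would fix a subgroup $\mG\le\PGL_2(q)=\Aut(F/\F_q)$ of order $r+1$, and let $E=F^{\mG}$ be its fixed field, so that $F/E$ is Galois with $\Gal(F/E)=\mG$; by Proposition 3.5.9 of \cite{St09}, $E$ is again rational, say $E=\F_q(z)$. The key input is a count of how many rational places of $F$ ramify (equivalently, fail to split completely) in $F/E$. By Proposition~\ref{prop:2.3}(i) and Hilbert's ramification theory, a rational place $P$ of $F$ is ramified in $F/E$ exactly when $\mG\cap\mG_0(P)\neq\{\mathrm{id}\}$; since $\mG_0(P)$ has order $q(q-1)$ and its structure is known, and since the ramification contributes to the different, I would bound the number of such bad rational places. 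The cleanest route: each ramified rational place of $E$ has at most $r$ rational places of $F$ above it and forces a different contribution of size at least $r$ (as $|\mG_i(P)|-1$ summed is $\ge|\mG|-1=r$), while $\deg\mathrm{Diff}(F/E)=2|\mG|-2+2\cdot 0\cdot\ldots$ — more precisely from Hurwitz, $-2=(r+1)(2g(E)-2)+\deg\mathrm{Diff}=(r+1)(-2)+\deg\mathrm{Diff}$, so $\deg\mathrm{Diff}(F/E)=2r$. Hence the total "badness" is $2r$, which caps the number of ramified rational places of $F$ and shows at least $q+1-2r$ of them split into $\lfloor(q+1-2r)/(r+1)\rfloor\ge m$ complete fibers $\{P_{i,1},\dots,P_{i,r+1}\}$, $1\le i\le m$. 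The hypothesis $r\le q/3$ is presumably what guarantees $\lfloor(q+1-2r)/(r+1)\rfloor\ge 1$ so that the construction is non-vacuous.

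Next I would set $\mP=\{P_{ij}\}$, $n=m(r+1)$, pick $x\in F$ with a single simple pole so that $F=\F_q(x)$ and $1,x,\dots,x^{r}$ span $F$ over $E$ (this works because $[F:E]=r+1$ and $x$ generates), and pick $z\in E$ with $\deg(z)_\infty^E=1$. Define
\[
V=\Big\{\sum_{i=0}^{r-1}\Big(\sum_{j=0}^{t-1}a_{ij}z^j\Big)x^i:\ a_{ij}\in\F_q\Big\},
\]
a vector space of dimension $rt=k$ over $\F_q$, sitting inside $\mL(G)$ for a suitable effective-after-adjustment divisor $G$ of degree $k+t-2$ supported away from (or handled by the local-parameter modification for) $\mP$; the bound $k+t-2=rt+t-2<m(r+1)-\ldots$ needs checking so that $0<\deg G<n$. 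Then the code is $C=C(\mP,V)$ in the modified sense of \S\ref{AGcode}. Injectivity of $f\mapsto(f(P))_{P\in\mP}$ on $V$, hence $\dim C=k$, follows because a nonzero $f\in V$ has pole divisor bounded in $x$-degree by $r-1$ over each place and in $z$-degree by $t-1$, so it vanishes at fewer than $n$ places; I'd pin this down via the minimum-distance argument below, which subsumes it.

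For the minimum distance, the point is that $V$ is invariant under $\mG$ in the following sense: for each fixed $(i,j)$ and each $\Gs\in\mG$, $\Gs(x^i)$ is again a combination of $1,x,\dots,x^{r}$ with coefficients in $E$, but more usefully, restricting a codeword to a single fiber $\{P_{i,1},\dots,P_{i,r+1}\}$ we get $r+1$ evaluations of a function whose "fiber part" lies in an $r$-dimensional space (spanned by $1,x,\dots,x^{r-1}$ modulo the local behavior), so each local block is a codeword of an $[r+1,r,2]$ MDS (single-parity-check) code — this gives locality $r$. For the distance, treat $f\in V$ as a polynomial in $z$ of degree $<t$ with coefficients in the $x$-span of dimension $r$; a standard interpolation/Vandermonde argument (as in \cite{TB14}) shows a nonzero $f$ can vanish on at most $(t-1)(r+1)+(\text{deficiency from the }x\text{-part})$ places, yielding $d\ge n-k-k/r+2$; the matching upper bound is the Singleton-type bound \eqref{eq:x1}, so equality holds and $C$ is optimal. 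The main obstacle I anticipate is the ramification count: one must argue carefully that the $2r$ units of different are spread so that no more than the claimed number of rational places of $F$ are ramified, and that even a ramified $Q$ in $E$ still leaves the remaining places usable or is simply discarded — here the inequality $m\le\lfloor(q+1-2r)/(r+1)\rfloor$ must be shown to leave enough completely-split rational places of $E$, using that ramified rational places of $E$ number at most $2$ (since each costs different $\ge r$ and $\deg\mathrm{Diff}=2r$) when $r\ge 2$, and the small cases $r=1$ handled separately.
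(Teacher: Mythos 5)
Your overall route is the same as the paper's: use the Hurwitz genus formula to get $\deg\mathrm{Diff}(F/F^{\mG})=2r$, deduce that only boundedly many rational places of $F$ ramify so that at least $q+1-2r\ge m(r+1)$ of them lie in completely split fibers (here $r\le q/3$ indeed guarantees $\left\lfloor\frac{q+1-2r}{r+1}\right\rfloor\ge 1$), evaluate $V=\{\sum_{i=0}^{r-1}(\sum_{j=0}^{t-1}a_{ij}z^j)x^i\}$ on those fibers, bound the pole degree of a nonzero $f\in V$ by $(t-1)(r+1)+(r-1)=k+\frac kr-2$ to get $d\ge n-k-\frac kr+2$ and $\dim=rt$, invoke the Singleton-type bound for equality, and recover an erased symbol inside its fiber by interpolation because $z$ is constant on each fiber while $x$ separates its points. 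This is exactly the paper's proof skeleton.

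There is, however, a concrete error in the step you yourself single out as the crux of the ramification count. The inequality $\sum_{i\ge 0}(|\mG_i(P)|-1)\ge |\mG|-1=r$ is false in general: $\mG_0(P)$ is the inertia subgroup of $\mG$ at $P$, of order $e_P(F/F^{\mG})$, not all of $\mG$, so a ramified place of $F^{\mG}$ with ramification index $e$ contributes only about $(r+1)(1-\frac1e)$ (as little as $(r+1)/2$, tame case $e=2$) to $\deg\mathrm{Diff}$, and your conclusion that at most two rational places of $F^{\mG}$ ramify is also false: for the Klein four-subgroup $\{x\mapsto x,-x,1/x,-1/x\}$ with $q\equiv 1\ (\mathrm{mod}\ 4)$ one gets six tamely ramified rational places of $F$ lying over three places of $F^{\mG}$, each fiber contributing only $2<r=3$. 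Fortunately none of this is needed: the statement you also make — each ramified rational place of $F$ contributes at least $1$ to the different, hence there are at most $2r$ of them, hence at least $q+1-2r$ rational places of $F$ split completely — is precisely the paper's argument and suffices once you drop the per-fiber ``$\ge r$'' claim; there is no need to bound ramified places of $F^{\mG}$ at all, nor to treat $r=1$ separately. Two smaller repairs: the paper avoids your ``modified AG code'' hedge by choosing $P_{\infty}\notin\mP$ (possible since $n\le q+1-2r<q+1$) and taking $z\in F^{\mG}$ with pole exactly at the place under $P_{\infty}$, so every $f\in V$ is regular on $\mP$ and the plain evaluation code works — if instead a pole of $x$ or $z$ met $\mP$, the locality argument would need the extra case analysis done in the length-$(q+1)$ construction; and your MDS-block claim for locality requires verifying that $x$ takes pairwise distinct values on the $r+1$ places of a fiber (immediate, since $x-c$ has a unique zero) together with the constancy of $z$ on fibers, which is what turns each local block into evaluations of a degree-$\le r-1$ polynomial at $r+1$ distinct points, recoverable by Lagrange interpolation.
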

\begin{proof} Let $F/\F_q$ be the rational function field. Then there exists a subgroup  $\mG$  of $\Aut(F/\F_q)$ of order $r+1$.
Denote by $\{R_1,R_2,\dots,R_s\}$ the set of rational places of $F$ which are ramified in $F/F^\mG$.
By the Hurwitz genus formula, we have
\[2g(F)-2\ge (r+1)(2g(F^\mG)-2)+\sum_{i=1}^s d_{R_i}(F/F^{\mG})\deg(R_i)\ge -2(r+1)+s.\]
This gives $s\le 2r$.
Hence, there are at least $q+1-s\ge q+1-2r\ge m(r+1)$ rational places of $F$ which are unramified in $F/F^{\mG}$.
As a rational place of $F$ is either ramified or splitting completely in $F/F^{\mG}$, there are $m$ rational places $\{Q_{1}, Q_{2}, \cdots, Q_{m} \}$ of $F^{\mG}$ which split completely in $F/F^{\mG}$. Denote by $\mP=\{P_1,P_2,\dots,P_n\}$ the set of rational places of $F$ lying over $\{Q_{1}, Q_{2}, \cdots, Q_{m} \}$.

Let $P_{\infty}$ be a rational place of $F$ such that $P_{\infty}\notin \mP$. Then there exists an element $x\in F$ such that $(x)_{\infty}=P_{\infty}$ and $F=\F_q(x)$ (see \cite[Proposition 1.6.6]{St09}).
Let $Q_{\infty}$ be the restriction of $P_{\infty}$ to $F^{\mG}$.
Similarly, there exists an element $z\in F^{\mG}$ such that $(z)_{\infty}=Q_{\infty}$ in $F^{\mG}$ and $F^{\mG}=\F_q(z)$. Thus, $\supp((z)_{\infty})\cap\mP=\emptyset$. Note that as a divisor of $F^{\mG}$, we have $(z)_{\infty}=Q_{\infty}$. However, as a divisor of $F$, we have $\deg(z)_{\infty}=[F:F^{\mG}]=r+1$.

For $t\ge 1$, consider the set of functions
\[V:=\left\{\sum_{i=0}^{r-1}\left(\sum_{j=0}^{t-1}a_{ij}z^j\right)x^i\in F:\; a_{ij}\in\F_q\right\}.\]
First of all, as $[\F_q(x):\F_q(z)]=r+1$, the set $\{1,x,\dots,x^{r-1}\}$ is linearly independent over $\F_q(z)$. Thus, the vector space $V$ over $\F_q$ has dimension $rt$.
Consider the subcode of the algebraic geometry  code
\[C(\mP,V)=\{(f(P_1),\dots,f(P_n)):\; f\in V\}.\]
Then we claim that $C(\mP,V)$ is an optimal $[n,k,d]$-\LRC with locality $r$, $k=rt$  and $d=n-k-\frac kr+2$.

Firstly, it is clear that $C(\mP,V)$ is an $\F_q$-linear code of length $n$. For every nonzero function $f\in V$,   the pole  divisor of $f$ is at most $(t-1)(z)_{\infty}+(r-1)(x)_{\infty}$. Thus, $\deg(f)_{\infty}\le (t-1)(r+1)+(r-1)=(r+1)t-2$ (note that we consider divisors of $F$ here, and hence $\deg(z)_{\infty}=[F:F^{\mG}]=r+1$). Thus, $f$ has at most $(r+1)t-2$ zeros. This implies that the Hamming weight of $(f(P_1),\dots,f(P_n))$ is at least $n-rt-t+2=n-k-\frac kr+2>0$. Therefore, the dimension of $C(\mP,V)$ is $k=rt$ and $d\ge n-k-\frac kr+2$. By the Singleton-type bound \eqref{eq:x1}, we obtain $d=n-k-\frac kr+2$ if $C(\mP,V)$ has locality $r$.

Finally let us prove that $C(\mP,V)$ has locality $r$. We claim that $x(P_\Ga)\neq x(P_\Gb)$ for any two different rational places $P_\Ga,P_\Gb\in \mP$. Otherwise,  $x-c$ would have two zeros $P_\Ga$ and $P_\Gb$, where $c=x(P_\Ga)= x(P_\Gb)\in \F_q$. Thus, $\deg(x-c)_0\ge 2$. This is a contradiction to the fact that $\deg(x-c)_0=\deg(x-c)_\infty=\deg(x)_\infty=1$.

Let $A_\ell$ be the set of rational places of $F$ which lie over $Q_\ell$ for $1\le \ell\le m$.
Suppose that the erased symbol of the codeword is $c_{\Ga}=f(P_{\Ga})$, where $P_{\Ga}\in A_\ell$. Put $b_i=\sum_{j=0}^{t-1}a_{ij}z^j(P_{\Ga})$ for all $0\le i\le t-1$. As $z$ is a function of $F^\mG$, we have $z(P_\Gb)=z(P_{\Ga})$ for all $\Gb\in A_\ell$. Hence, for any $P_{\Gb}\in A_\ell$,
we have $$f(P_{\Gb})=\sum_{i=0}^{r-1}\left(\sum_{j=0}^{t-1}a_{ij}z^j(P_{\Gb})\right)x^i(P_{\Gb})=\sum_{i=0}^{r-1}b_ix^i(P_\Gb).$$
Define the decoding polynomial $\Gd(x)=\sum_{i=0}^{r-1}b_ix^i.$
Then $$\Gd(x)(P_{\Gb})=\sum_{i=0}^{r-1}b_ix^i(P_{\Gb})=f(P_{\Gb}).$$
Since $\Gd(x)$ is a polynomial of degree at most $r-1$ and $\{x(P_{\Gb})\}_{\Gb\in A_\ell\setminus \{P_{\Ga}\}}$ are pairwise distinct, it can be interpolated from these $r$ symbols $c_{\Gb}=f(P_{\Gb})$ for $P_{\Gb}\in A_{\ell}\setminus \{P_{\Ga}\}$. Hence, the erased symbol $c_{\Ga}=\Gd(x)(P_{\Ga})$  can be recovered by the Lagrange interpolation. This completes the proof.
\end{proof}

Combining Proposition \ref{prop:3.1} with Propositions \ref{even q} and \ref{odd q}, we immediately obtain the following optimal locally repairable codes.

\begin{theorem}\label{thm:3.2}
If a positive integer $r\le \frac{q}{3}$ satisfies one of the following conditions, then for any positive integer $n\le (r+1)\left\lfloor\frac{q+1-2r}{r+1}\right\rfloor$ that is divisible by $r+1$ and any integer $t$ with  $1\le t\le \frac{n}{r+1}$, there exists an optimal $q$-ary $[n,k,d]$-\LRC with locality $r$,  $k=rt$ and $d=n-k-\frac kr+2$.
\begin{itemize}
\item[{\rm (i)}] $r+1$ is a divisor of $q-1$;
\item[{\rm (ii)}] $r+1$ is a divisor of $q$;
\item[{\rm (iii)}] $r+1=up^v$, where $u\ge 2$, $v\ge 1$ and $u$ is a common divisor of $q-1$ and $p^v-1$;
\item[{\rm (iv)}] $r+1$ is a divisor of $q+1$;
\item[{\rm (v)}] $r+1=2h$, where $h\ge 2$ is a divisor of $q\pm 1$ for even $q$ or $\frac{q\pm 1}{2}$ for odd $q$;
\item[{\rm (vi)}] $r+1=\ell(\ell^2-1)$, where $q$ is a power of $\ell$;
\item[{\rm (vii)}] $r+1=\ell(\ell^2-1)/2$, where $q$ is odd and $q$ is a power of $\ell$;
\item[{\rm (viii)}] $r+1=12$, if $q$ is an even power of $2$ or $q$ is odd.
\item[{\rm (ix)}] $r+1=24$, if $q$ is odd;
\item[{\rm (x)}] $r+1=60$, if $q$ is an even power of $2$ or $q\equiv \pm 1(\text{mod } 10)$.
\end{itemize}
\end{theorem}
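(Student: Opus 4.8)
The plan is to deduce Theorem~\ref{thm:3.2} directly from Proposition~\ref{prop:3.1} together with the classification of subgroups of $\PGL_2(q)$ given in Propositions~\ref{even q} and \ref{odd q}. Proposition~\ref{prop:3.1} already delivers an optimal $[n,k,d]_q$-\LRC with locality $r$, $k=rt$, $d=n-k-\frac kr+2$ for every length $n=m(r+1)$ with $m\le\left\lfloor\frac{q+1-2r}{r+1}\right\rfloor$ and every $t$ with $1\le t\le m$, \emph{provided} that $\PGL_2(q)$ contains a subgroup of order $r+1$ and $r\le q/3$. Thus the entire content of the theorem reduces to checking that, in each of the ten listed cases (i)--(x), the integer $r+1$ is realized as the order of some subgroup of $\PGL_2(q)$. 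Since the hypothesis $r\le q/3$ is retained verbatim in the statement of Theorem~\ref{thm:3.2}, there is nothing to verify there, and the bound $n\le(r+1)\left\lfloor\frac{q+1-2r}{r+1}\right\rfloor$ together with $1\le t\le\frac n{r+1}$ matches exactly the constraints $n=m(r+1)$, $m\le\left\lfloor\frac{q+1-2r}{r+1}\right\rfloor$, $1\le t\le m$ of Proposition~\ref{prop:3.1}.

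First I would dispose of the even and odd characteristic cases in parallel by walking through the classification. For $q=2^s$ I would invoke Proposition~\ref{even q}: case (i) with $d=r+1$ gives cyclic subgroups $\mathbb Z_d$ whenever $d\mid(q\pm1)$ and $d>2$, covering conditions (i) and (iv) when $r+1>2$ (the degenerate small cases $r+1\le 2$ being trivial or handled separately); case (vi) gives the elementary abelian subgroup of order $q$ and, more generally, elementary abelian subgroups of order $p^v$ — which, combined with the last sentence of Proposition~\ref{even q} placing subgroups of type (vi) and (vii) inside $\AGL_2(q)$, covers conditions (ii) and (iii); case (ii) with $d=h$ gives the dihedral group $D_{2h}$, covering condition (v); cases (iii), (iv), (v) give $A_4$, $A_5$, $\PGL_2(\ell)$, covering conditions (viii) with $r+1=12$, (x) with $r+1=60$, and (vi) with $r+1=\ell(\ell^2-1)$. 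For $q=p^s$ odd I would invoke Proposition~\ref{odd q} in the same way: (ii) gives cyclic subgroups of order $d\mid(q\pm\Ge)$ for conditions (i) and (iv); (ix) and (x) of Proposition~\ref{odd q} give the elementary abelian and the semidirect-product subgroups for conditions (ii) and (iii); (iv) and (v) of Proposition~\ref{odd q} give $D_{2d}$ for divisors $d$ of $\frac{q\pm\Ge}2$ or of $q\pm\Ge$, covering condition (v); (vi) gives $A_4$, $S_4$ of order $24$, and $A_5$ of order $60$, covering conditions (viii), (ix), (x); and (vii)--(viii) of Proposition~\ref{odd q} give $\PSL_2(\ell)$ of order $\frac{\ell(\ell^2-1)}2$ and $\PGL_2(\ell)$ of order $\ell(\ell^2-1)$, covering conditions (vii) and (vi). The only mild bookkeeping issue is matching the parametrizations: e.g. in (v) one must observe that since $h\ge2$ and $h\mid q\pm1$ (even $q$) one is exactly in case (ii) of Proposition~\ref{even q} or case (iv)/(v) of Proposition~\ref{odd q}, and in (iii) one must check that the hypothesis $u\mid\gcd(q-1,p^v-1)$ is precisely the divisibility condition in case (vii) of Proposition~\ref{even q} and case (x) of Proposition~\ref{odd q}.

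Having exhibited, in every case, a subgroup $\mG\le\PGL_2(q)\cong\Aut(F/\F_q)$ of order $r+1$, I would then simply feed this $\mG$ into Proposition~\ref{prop:3.1} and read off the stated parameters. Since $r\le q/3$ by hypothesis, the proposition applies without further restriction, and for each admissible $n=m(r+1)$ and each $t\in\{1,\dots,m\}$ it produces the optimal code claimed; taking $m=\frac n{r+1}$ recovers the full range $n\le(r+1)\left\lfloor\frac{q+1-2r}{r+1}\right\rfloor$ and $1\le t\le\frac n{r+1}$. I do not expect any genuine obstacle here: the theorem is a corollary assembled by pattern-matching the ten arithmetic conditions against the subgroup tables. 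The one place demanding a little care — the ``main obstacle,'' such as it is — is the case analysis for the dihedral and the small exceptional groups in odd characteristic, where Proposition~\ref{odd q} splits the dihedral subgroups into several families distinguished by whether $d$ divides $q\pm\Ge$ or $\frac{q\pm\Ge}2$ and by parity of $(q\pm\Ge)/d$; one must verify that condition (v) of the theorem (''$h\ge2$ is a divisor of $q\pm1$ for even $q$ or $\frac{q\pm1}2$ for odd $q$'') is covered by at least one of these families, which it is, since for odd $q$ a divisor $h$ of $\frac{q\pm\Ge}2$ is exactly case (iv) of Proposition~\ref{odd q}, and for even $q$ a divisor $h>2$ of $q\pm1$ is case (ii) of Proposition~\ref{even q}. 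With that checked, the proof is complete.
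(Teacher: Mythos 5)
Your proposal is correct and is essentially the paper's own argument: the paper obtains Theorem \ref{thm:3.2} by exactly this route, namely combining Proposition \ref{prop:3.1} with the subgroup classifications in Propositions \ref{even q} and \ref{odd q}, stating it as an immediate consequence. Your case-by-case matching of conditions (i)--(x) to the subgroup tables simply spells out the bookkeeping the paper leaves implicit.
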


\begin{rmk}{\rm The advantage of  Theorem \ref{thm:3.2} is that one can produce optimal $q$-ary \LRCs  without knowing explicit structures of subgroups of $\PGL_2(q)$, while the disadvantage of Theorem \ref{thm:3.2} is that it is an existence result.
}\end{rmk}

\begin{rmk}
For a subgroup $\mG$ of $\Aut(F/\F_q)$ of order $r+1$, let $s\ge 1$ be the number of rational places of $F$ which are ramified in $F/F^{\mG}$ (see \cite[Theorem 11.91]{HKT08}). Then, the length $n$ of the optimal \LRCs can be as large as $q+1-s$. In the following sections, by analyzing explicit subgroup structures, we will construct the optimal locally repairable codes explicitly.
\end{rmk}

Let us look at some examples by plugging some values of $q$, $r$, $t$ and $n$ in Theorem \ref{thm:3.2}.

\begin{ex}\label{ex:3.3}
{\rm \begin{itemize}
\item[{\rm (i)}]
Let $q=27$. Then for $r\in\{1,2,3,5,6,8\}$, there is a subgroup of $\PGL_2(27)$ of order $r+1$. Thus, for  $n= (r+1)\left\lfloor\frac{28-2r}{r+1}\right\rfloor$  and
any integer $t$ with  $1\le t\le \left\lfloor\frac{28-2r}{r+1}\right\rfloor$, there exists an optimal $q$-ary $[n,k,d]$-\LRC with locality $r$,  $k=rt$ and $d=n-k-\frac kr+2$. For instance, for $r=2$, there exists an optimal $27$-ary $[24,2t,26-3t]$-\LRC with locality $3$ for any $1\le t\le 8$.
If we take $r=3$, we obtain  an optimal $27$-ary $[20,3t,22-4t]$-\LRC with locality $3$ for any $1\le t\le 5$.

\item[{\rm (ii)}]  Let $q=64$. Then for $r\in\{1,2,3,4,5,6,7,8,9,11,12,13,15,17,20\}$, there is a subgroup of $\PGL_2(64)$ of order $r+1$. Thus, for  $n= (r+1)\left\lfloor\frac{65-2r}{r+1}\right\rfloor$  and
any integer $t$ with  $1\le t\le \left\lfloor\frac{65-2r}{r+1}\right\rfloor$, there exists an optimal $q$-ary $[n,k,d]$-\LRC with locality $r$,  $k=rt$ and $d=n-k-\frac kr+2$. For instance, for $r=3$, there exists an optimal $64$-ary $[56,3t,58-4t]$-\LRC with locality $3$ for any $1\le t\le 14$. If we let $r=5$, then there exists an optimal $64$-ary $[54,5t,56-6t]$-\LRC with locality $5$ for any $1\le t\le 9$.
\end{itemize}
}\end{ex}

By Proposition \ref{prop:3.1}, Propositions \ref{even q} and \ref{odd q}, we can give a result on some small locality for \LRCs over finite fields of small characteristics.

\begin{cor}
If a pair $(q,r)$ of positive integers satisfies one of the following conditions, then for any positive integer $n\le (r+1)\left\lfloor\frac{q+1-2r}{r+1}\right\rfloor$ that is divisible by $r+1$ and
 any integer $t$ with  $1\le t\le \frac{n}{r+1}$, there exists an optimal $q$-ary $[n,k,d]$-\LRC with locality $r$,  $k=rt$ and $d=n-k-\frac kr+2$.
\begin{itemize}
\item[{\rm (i)}] $q=2^s$ with $s\ge 4$ and $r\in\{1,2,5\}$;
\item[{\rm (ii)}] $q=3^s$ with $s\ge 4$ and $r\in\{1,2,3,5,11,23\}$;
\item[{\rm (iii)}] $q=4^s$ with $s\ge 3$ and $r\in\{1,2,3,4,5,9,11\}$;
\item[{\rm (iv)}] $q=5^s$ with $s\ge 3$  and $r\in\{1,2,3,4,5,9,11,19,23\}$.
\end{itemize}
\end{cor}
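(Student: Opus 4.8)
The plan is to derive the Corollary as an immediate consequence of Theorem \ref{thm:3.2} by, for each of the four families of characteristics, checking which values of $r$ in the listed set arise as $r+1 = |\mathcal{G}|$ for some subgroup $\mathcal{G}$ of $\PGL_2(q)$ satisfying the numerical hypothesis $r \le q/3$. Since Theorem \ref{thm:3.2} already packages the existence statement once one knows $r+1$ is the order of a subgroup of $\PGL_2(q)$, the only work is a bookkeeping verification that each claimed $r$ matches one of the ten criteria (i)--(x) of that theorem for the specified range of $q$. So the proof would simply be: ``This follows from Theorem \ref{thm:3.2}; we verify the hypotheses case by case.''

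For part (i), $q = 2^s$ with $s \ge 4$: here $r = 1$ is trivial (repetition code, always available), $r = 2$ means $r+1 = 3$, which divides $q-1 = 2^s - 1$ precisely when $s$ is even, but also $3 \mid 2^s+1$ when $s$ is odd, so criterion (iv) covers it; and $r = 5$ means $r+1 = 6 = 2 \cdot 3$, which is of the form $2h$ with $h = 3$ a divisor of $q \pm 1$ (since $3 \mid 2^s \pm 1$ for all $s$), so criterion (v) applies. One must also check $r \le q/3$, i.e. $5 \le 2^s/3$, which holds for $s \ge 4$. For part (ii), $q = 3^s$ with $s \ge 4$: $r = 1$ trivial; $r = 2$ gives $r+1 = 3 \mid q$, criterion (ii); $r = 3$ gives $r+1 = 4 = up^v$ with $p = 3$? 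No --- instead $4 \mid q+1 = 3^s+1$ when $s$ is odd, or $4 = 2h$ with $h = 2$... one should use criterion (iv) or (v) depending on parity of $s$, and here care is needed since $h=2$ is excluded in (v) (it requires $h \ge 2$, so $h=2$ is allowed actually --- recheck the statement). For $r = 5$, $r+1 = 6 = 2h$, $h = 3 \mid \frac{q \pm 1}{2}$? Since $q = 3^s$ is odd one needs $3 \mid \frac{3^s \pm 1}{2}$; as $3 \nmid 3^s \pm 1$ this fails, so instead use criterion (iii): $6 = 2 \cdot 3 = up^v$ with $u = 2$, $v = 1$, $p = 3$, and $u = 2$ divides $q - 1 = 3^s - 1$ and $p^v - 1 = 2$. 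For $r = 11$, $r+1 = 12$, criterion (viii) (q odd); for $r = 23$, $r+1 = 24$, criterion (ix) (q odd). Throughout one checks $r \le q/3$: the largest is $23 \le 3^s/3 = 3^{s-1}$, true for $s \ge 4$.

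For parts (iii) and (iv), $q = 4^s$ ($s \ge 3$) and $q = 5^s$ ($s \ge 3$), the same routine applies. For $q = 4^s$: $r+1 \in \{2,3,4,5,6,10,12\}$; note $4^s - 1 = (2^s-1)(2^s+1)$ is divisible by $3$ and by $5$ (since $4 \equiv -1 \pmod 5$ forces $5 \mid 4^s+1$ when... actually $4^s \equiv 4^s$, and $4^2 = 16 \equiv 1$, so $5 \mid 4^s - 1$ iff $s$ even); one routes $r = 3$ ($r+1=4 \mid q$, criterion (ii)), $r = 4$ ($r+1 = 5$, divides $q-1$ or $q+1$), $r=9$ ($r+1 = 10 = 2h$, $h = 5$), $r = 11$ ($r+1 = 12$, criterion (viii) since $q = 4^s = 2^{2s}$ is an even power of $2$). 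For $q = 5^s$: $r+1 \in \{2,\dots,6,10,12,20,24\}$; $r = 2$ ($3 \mid q \pm 1$), $r = 4$ ($5 \mid q$, criterion (ii)), $r = 9$ ($10 = 2h$), $r = 11$ ($12$, criterion (viii), $q$ odd), $r = 19$ ($20 = 2h$ with $h = 10 \mid \frac{q\pm 1}{2}$, or perhaps criterion (v) directly), $r = 23$ ($24$, criterion (ix)). The bound $r \le q/3$ is the binding constraint forcing $s \ge 3$ in (iii) and (iv): the worst case in (iv) is $23 \le 5^s/3$, which needs $5^s \ge 69$, true for $s \ge 3$ but false for $s = 2$.

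The main obstacle --- really the only subtlety --- is getting the divisibility conditions on $q \pm 1$ right, since whether $3, 5, h$ divide $q-1$ or $q+1$ depends on $s \bmod (\text{ord})$, and one must make sure that for \emph{every} admissible $s$ at least one of the two options (dividing $q-1$ via criterion (i)/(v) or dividing $q+1$ via criterion (iv)/(v)) holds; fortunately each of the relevant small integers $d$ has $d \mid q^2 - 1$ unconditionally (as $q$ is coprime to $d$), hence $d \mid q - 1$ or $d \mid q + 1$ after possibly replacing $d$ by $d$ itself when $d \mid q+1$, and for the dihedral criterion (v) one uses $h \mid \frac{q^2-1}{\gcd}$ arguments. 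A secondary point to state carefully is that $r = 1$ always works trivially (the repetition code), which I would note once rather than shoehorn into the group-theoretic framework. With these checks dispatched, the Corollary follows.
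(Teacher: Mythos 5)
Your overall strategy is correct and is the same as the paper's: the corollary is a direct case-by-case consequence of Proposition~\ref{prop:3.1} (packaged as Theorem~\ref{thm:3.2}), and one only needs to exhibit, for each listed $r$, a criterion of Theorem~\ref{thm:3.2} certifying that $\PGL_2(q)$ has a subgroup of order $r+1$, then check $r\le q/3$. Most of your bookkeeping is right, and your identification of criterion~(iii) for $r=5$, $q=3^s$ is exactly what the paper does.

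However, in part~(iv) you route both $r=9$ and $r=19$ (i.e.\ $r+1=10$ and $r+1=20$) through the dihedral criterion~(v), and this fails. For $q=5^s$ odd, criterion~(v) needs $h$ to divide $(q\pm 1)/2$, but $5\nmid 5^s\pm 1$ (since $5\mid q$), so $5\nmid(q\pm1)/2$ and $10\nmid(q\pm1)/2$; one can also check via Proposition~\ref{odd q} that $\PGL_2(5^s)$ has no dihedral subgroups $D_{10}$ or $D_{20}$. The route the paper uses --- and the one that works --- is criterion~(iii): affine subgroups of $\AGL_2(q)$ of order $up^v$ with $(u,p,v)=(2,5,1)$ for order $10$ and $(4,5,1)$ for order $20$; the hypothesis $u\mid\gcd(q-1,p^v-1)$ holds because $5^s\equiv 1\pmod 4$. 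A related flaw is the fallback argument in your last paragraph: $d\mid q^2-1$ with $\gcd(d,q)=1$ does \emph{not} imply $d\mid q-1$ or $d\mid q+1$ (take $d=8$, $q=5$), so that cannot be used as a blanket justification; each divisibility really must be checked, and when it fails one must fall back on the affine criterion~(iii) or the sporadic subgroups ($A_4$, $S_4$, $\PGL_2(\ell)$) as the paper does.
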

\begin{proof} (i) follows from Proposition \ref{prop:3.1} and the fact that $\PGL_2(q)$ contains the group $\PGL_2(2)$ and $\PGL_2(2)$ contains subgroups of order $2$, $3$ and $6$.

(ii) For $q=3^s$ with $s\ge 4$, $2|(q-1)$, $3|q$, $4|(q-1)$ for even $s$ or $4|(q+1)$ for odd $s$. Thus, there is a subgroup of $\PGL_2(q)$ of order $r+1$ for $r\in\{1,2,3\}$. There is also an affine subgroup of order $6$ since $2|(3-1)$ from Theorem \ref{thm:3.2}(iii).  The order of $A_4$ is $12$, and the order of $\PGL_2(3)$ is $24$.

(iii) There is a subgroup of  $\PGL_2(q)$ of order $r+1$ for $r\in\{1,2,3,4,5,9,11\}$ since  $2|q$, $3|(q-1)$, $4|q$, $5|(q-1)$ for even $s$ or $5|(q+1)$ for odd $s$; the order of $\PGL_2(2)$ is 6; there is a dihedral subgroup of order $10$ from  Theorem \ref{thm:3.2}(v); and the order of $A_4$ is $12$.

(iv) There is a subgroup of  $\PGL_2(q)$ of order $r+1$ for $r\in\{1,2,3,4,5,9,11,19,23\}$ since  both $2$ and $4$ are divisors of $q-1$; both $3$ and $6$ are divisors of $q-1$ for even $s$ or $q+1$ for odd $s$; $5|q$; there are affine subgroups of order $10$ and $20$, respectively, since $2|(5-1)$ and $4|(5-1)$ from  Theorem \ref{thm:3.2}(iii); the order of $A_4$ is $12$; and the order of $S_4$ is $24$.
\end{proof}

\section{Explicit construction via affine subgroups}\label{sec:4}
In the previous section, we provided a construction for arbitrary subgroups of $\PGL_2(q)$ by estimating the number of ramified rational places. From this section onwards, we will consider some particular subgroups and analyze ramification of rational places.

In this section, we will construct optimal \LRCs from subgroups of $\AGL_2(q)$. It turns out that the optimal \LRCs constructed in  \cite{TB14} are examples in this section.

Let $\F_q$ be the finite field with $q=p^s$ elements and let $F$ be the rational function field $\F_q(x)$ over $\F_q$. The proof of the following proposition provides explicit group structures of affine linear group  $\AGL_2(q)$.

\begin{prop} Let $q=p^s$.
Let $v$ be an integer with $0\le v\le s$ and let $u$ be a positive integer satisfying $u|(q-1)$ and $u|(p^v-1)$.
Then there is a subgroup $\mG$ of  $\AGL_2(q)$  of order $up^v$.
\end{prop}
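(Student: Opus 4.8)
The plan is to construct the subgroup $\mG$ explicitly as an internal semidirect product inside $\AGL_2(q)$. First I would pick out the elementary abelian $p$-subgroup. Let $V$ be the additive subgroup $(\F_q,+)$, which we identify with the translations $\left\{\left(\begin{array}{cc}1&b\\ 0&1\end{array}\right):\; b\in\F_q\right\}$ sitting inside $\AGL_2(q)$; this is an elementary abelian $p$-group of order $q=p^s$. Since $0\le v\le s$, there is an $\F_p$-subspace $W\subseteq \F_q$ of dimension $v$, hence $|W|=p^v$, and the corresponding translations form an elementary abelian subgroup $N$ of order $p^v$. If $v=0$ then $N$ is trivial, which is fine.

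Next I would produce the cyclic part. Since $u\mid (q-1)$, the multiplicative group $\F_q^*$ (cyclic of order $q-1$) contains a unique cyclic subgroup $\langle a\rangle$ of order $u$, and each such $a$ gives the diagonal element $\left(\begin{array}{cc}a&0\\ 0&1\end{array}\right)\in \AGL_2(q)$ acting as $x\mapsto ax$. Call the resulting cyclic group $H$, of order $u$. The key step is to choose $W$ so that it is stable under multiplication by $a$: because $u\mid (p^v-1)$, the subgroup $\langle a\rangle$ of order $u$ is contained in $\F_{p^v}^*\subseteq \F_q^*$ (here $\F_{p^v}$ is the unique subfield of $\F_q$ of order $p^v$ — note $v\mid s$ is not assumed, but $\F_{p^v}$ embeds in $\F_q$ exactly when $p^v-1 \mid q-1$, which follows from $u\mid p^v-1$ and $u\mid q-1$ only in special cases, so here I would instead simply take $W=\F_{p^v}$ when $v\mid s$ and, in general, take $W$ to be any $\langle a\rangle$-invariant $\F_p$-subspace of dimension $v$, which exists since $\F_q$ is a module over the semisimple algebra $\F_p[\langle a\rangle]$ and one can extract an invariant subspace of any prescribed dimension by a direct-sum decomposition argument). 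With such a $W$, conjugation by the diagonal element sends the translation by $b\in W$ to the translation by $ab\in W$, so $H$ normalizes $N$.

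Finally I would assemble the pieces: $\mG := N \rtimes H$ is a subgroup of $\AGL_2(q)$. Since $N\cap H=\{{\rm id}\}$ (a nontrivial translation is never diagonal) and $H$ normalizes $N$, the product $NH$ is a subgroup of order $|N|\cdot|H| = p^v\cdot u = up^v$, as required. I expect the main obstacle to be the invariance step: one must be careful that an $\langle a\rangle$-stable $\F_p$-subspace of $\F_q$ of the exact dimension $v$ exists. When $\gcd(v,s)$ behaves well (e.g. $v\mid s$) one can simply take $W=\F_{p^v}$, on which $\langle a\rangle$ acts by multiplication since $u\mid p^v-1$ forces $a\in\F_{p^v}$; in the general case one decomposes $\F_q$ as an $\F_p[\langle a\rangle]$-module into irreducibles (possible since $u$ is prime to $p$, so the group algebra is semisimple) and notes that each irreducible summand has $\F_p$-dimension equal to the multiplicative order of $a$ modulo $p$ (a divisor of $\gcd(v,s)$), so one can pick a sub-sum of total dimension exactly $v$; this uses $u\mid p^v-1$ precisely to guarantee the summand dimensions divide $v$. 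Everything else is a routine order count.
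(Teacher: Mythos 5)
Your construction is correct and takes essentially the same route as the paper's proof: both take $H\le\F_q^*$ of order $u$ and seek a $p^v$-element $\F_p$-subspace $W\subseteq\F_q$ stable under multiplication by $H$, and both obtain $W$ by noting that $\F_p(a)=\F_{p^\ell}$ where $\ell=\mathrm{ord}_u(p)$ divides both $v$ and $s$, so one may take $W$ to be any $\F_{p^\ell}$-subspace of $\F_q$ of $\F_p$-dimension $v$ (your semisimple $\F_p[\langle a\rangle]$-module decomposition is exactly this once unwound, since $\mu_a$ acts as scalar multiplication by the field element $a$, forcing every irreducible summand to be a one-dimensional $\F_{p^\ell}$-subspace). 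One small slip: ``the multiplicative order of $a$ modulo $p$'' should read ``the multiplicative order of $p$ modulo $u$'' (equivalently $[\F_p(a):\F_p]$), since $a$ is a field element rather than an integer.
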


\begin{proof}
If $u$ is a divisor of $q-1$, then there exists a subgroup $H$ of the multiplicative group $\F_q^*$ of order $u$. As $u|(p^v-1)$,  the field $\F_p(H)$ is contained in $\F_{p^v}$. Put $\ell=\min \{ t>0: u|(p^t-1)\}$. Then we have
 $\F_p(H)=\F_{p^\ell}$ and $\ell|\gcd(v,s)$.

Let $W$ be a vector subspace of $\F_q$ over $\F_{p^\ell}$ with dimension $v/\ell$. Put
\begin{equation}\label{up^u} \mG:=\left\{\left(\begin{array}{cc}a&b\\ 0&1\end{array}\right):\; a\in H,\ b\in W\right\}.\end{equation}
Then it is easy to verify that $\mG$ is a subgroup of $\AGL_2(q)$ of order $up^{v}$.
\end{proof}

The ramification information of the $F/F^{\mG}$ are provided in the following proposition. In particular, the number of ramified rational places of $F$ in the extension  $F/F^{\mG}$ can be determined.

\begin{prop}\label{prop:4.2} Let $q=p^s$.
Let $\mG$ be a subgroup of $\AGL_2(q)$ with order $up^{v}$ that is defined in Equation \eqref{up^u}. Then the extension $F/F^{\mG}$ has the following properties.
\begin{itemize}
\item [(i)] $[F:F^{\mG}]=up^{v}$.
\item [(ii)] The infinity place $P_{\infty}$ of $F$ is totally ramified in $F/F^{\mG}$ with ramification index $e_{P_{\infty}}(F/F^{\mG})=up^{v}$ and different exponent  $d_{P_{\infty}}(F/F^{\mG})=up^{v}+p^{v}-2$, where $P_{\infty}$ is the unique pole place of $x$. There is a rational place of $F^{\mG}$ which splits into $p^{v}$ rational places $\{P_1, P_2, \cdots, P_{p^{v}}\}$ of  $F$, each place $P_i$ has ramification index $e_{P_{i}}(F/F^{\mG})=u$ and different exponent $d_{P_{i}}(F/F^{\mG})=u-1$ in $F/F^{\mG}$. Furthermore, there are $(q-p^v)/(up^v)$ rational places of $F^{\mG}$ that split completely in $F/F^\mG$.
\item [(iii)] The unique zero place of $x$ is ramified  in  $F/F^{\mG}$ with ramification index $u$. Hence, $P_0\in\{P_1, P_2, \cdots, P_{p^{v}}\}$.
\item[(iv)] Other places of $F$ are unramified in  $F/F^{\mG}$.
\end{itemize}
\end{prop}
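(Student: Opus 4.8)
The plan is to compute everything from the explicit description $\mG=\{x\mapsto ax+b:\ a\in H,\ b\in W\}$, where $H\le\F_q^*$ has order $u$ and $W\le\F_q$ is an $\F_{p^\ell}$-subspace of dimension $v/\ell$ (so $|W|=p^v$ and $|\mG|=up^v$). Part (i) is immediate from $|\mG|=up^v$ together with the fact that $F/F^{\mG}$ is Galois with group $\mG$. For part (iv), note that any $\sigma\in\mG\setminus\{\mathrm{id}\}$ given by $\sigma(x)=ax+b$ fixes exactly the places $P_{\Gc}$ with $a\Gc+b=\Gc$ (one rational place if $a\neq1$) or none at all if $a=1,b\neq0$, plus possibly $P_\infty$; hence the only places of $F$ that can ramify in $F/F^{\mG}$ are $P_\infty$ and finitely many rational places among the $P_\Gc$, and all other places (in particular all places of degree $\ge2$) are unramified. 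This reduces the whole proposition to a local analysis at $P_\infty$ and at the rational places $P_\Gc$.

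For part (ii), first treat $P_\infty$: the stabilizer of $P_\infty$ in $\mG$ is all of $\mG$ (every affine map fixes the point at infinity), so $P_\infty$ is totally ramified with $e_{P_\infty}=up^v$. To get the different exponent I would use Hilbert's different formula $d_{P_\infty}(F/F^{\mG})=\sum_{i\ge0}(|\mG_i(P_\infty)|-1)$ together with the known ramification filtration at $P_\infty$ recalled just before the proposition: inside $\Aut(F/\F_q)$ one has $\mG_0(P_\infty)=\AGL_2(q)$, $\mG_1(P_\infty)\cong(\F_q,+)$, and $\mG_i(P_\infty)$ trivial for $i\ge2$. Intersecting with $\mG$ gives $\mG_0(P_\infty)\cap\mG=\mG$ of order $up^v$, $\mG_1(P_\infty)\cap\mG=\{x\mapsto x+b:\ b\in W\}$ of order $p^v$, and $\mG_i(P_\infty)\cap\mG=\{\mathrm{id}\}$ for $i\ge2$; hence $d_{P_\infty}=(up^v-1)+(p^v-1)=up^v+p^v-2$, as claimed. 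Next, the place structure over the base: since $P_\infty$ is totally ramified, the Hurwitz genus formula for $F/F^{\mG}$ (with $g(F)=g(F^{\mG})=0$) forces $\deg\mathrm{Diff}(F/F^{\mG})=2up^v-2$, of which $d_{P_\infty}=up^v+p^v-2$ is contributed by $P_\infty$; the remaining $up^v-p^v=p^v(u-1)$ must come from the tamely ramified rational places $P_\Gc$, each contributing different exponent $e-1$ where $e\mid u$ (tame because $e\mid u\mid q-1$ is prime to $p$). Finally I would pin down which rational places ramify: the zero place $P_0$ of $x$ (see (iii) below) has stabilizer $\{x\mapsto ax:\ a\in H\}$ of order $u$, giving one base place splitting into $p^v$ places $P_1,\dots,P_{p^v}$ each with $e=u$, $d=u-1$; summing these gives exactly $p^v(u-1)$, so these are the only ramified rational places, and the remaining $q+1-p^v-1=q-p^v$ rational places of $F$ distinct from $P_\infty$ split into $(q-p^v)/(up^v)$ completely split base places. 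Part (iii) is the observation that $x=0$ is fixed precisely by the linear part $\{x\mapsto ax:a\in H\}$, which has order $u$, so $P_0$ ramifies with index $u$ and is therefore one of $P_1,\dots,P_{p^v}$.

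The main obstacle is the bookkeeping in part (ii): one must correctly compute the ramification filtration $\mG_i(P_\infty)\cap\mG$ (which I get for free from the recalled structure of $\mG_i(P_\infty)$ in $\Aut(F/\F_q)$) and then use the Hurwitz genus formula together with the tameness of the finite ramification to show that the orbit $\{P_1,\dots,P_{p^v}\}$ of size $p^v$ over a single base place, each with $e=u$, \emph{exhausts} the finite ramification — i.e.\ that no other rational place ramifies and the remaining base places split completely. The key numerical identity making this work is $\bigl(\text{genus contribution of }P_\infty\bigr)+p^v(u-1)=2up^v-2$, and one should double-check the count $q-p^v$ of nonsplit-ramified rational places being divisible by $up^v$, which follows since $W$ is an $\F_{p^\ell}$-space and $H\le\F_{p^\ell}^*$ acts freely on $\F_q\setminus W$ by a dimension count over $\F_{p^\ell}$.
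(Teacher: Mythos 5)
Your proposal is correct and follows essentially the same route as the paper's proof: identify the ramification filtration at $P_\infty$ by intersecting the known filtration for $\Aut(F/\F_q)$ with $\mG$, apply Hilbert's different formula, then invoke the Hurwitz genus formula (with $g(F)=g(F^{\mG})=0$) to pin down the remaining ramification at the single $\mG$-orbit of $P_0$ and conclude the rest splits completely. The only cosmetic difference is that you make explicit the fixed-point and stabilizer computations (hence the tameness at finite places and the exclusion of degree-$\ge 2$ places) that the paper dismisses as ``straightforward to verify'' or deduces implicitly from the Hurwitz constraint $\sum d_i/e_i = (u-1)/u < 1$.
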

\begin{proof} (i) is clear.

To prove (ii), let $\mA$ denote the automorphism group $\Aut(F/\F_q)$.
From Proposition \ref{prop:2.3} and the paragraph after Proposition \ref{prop:2.3}, we know that the inertia group of  the infinite place  $P_{\infty}$ in $F/F^\mA$ is ${\AGL_2(q)}$. Thus,  the infinite place $P_{\infty}$ is totally ramified in $F/F^{\AGL_2(q)}$.
Since $\mG$ is a subgroup of $\AGL_2(q)$, $P_{\infty}$ is totally ramified in $F/F^{\mG}$, i.e., $e_{P_{\infty}}(F/F^{\mG})=up^{v}$.
It is straightforward to verify that the orders of  ramification groups of $P_{\infty}$ in $F/F^{\mG}$ are given by
$$|\mG_0(P_{\infty})|= up^v,  |\mG_1(P_{\infty})|= p^v \text{ and }  |\mG_i(P_{\infty})|= 1 \text{ for } i\ge 2.$$
Hence, the different exponent is $d_{P_{\infty}}(F/F^{\mG})=up^{v}+p^{v}-2$ from Hilbert's Different Theorem.
Assume that $Q_1,\cdots, Q_k$ are the remaining ramified places of $F^{\mG}$ in $F/F^{\mG}$ with ramification indices $e_i$ and different exponents $d_i$ for $1\le i\le k$.
Form the Hurwitz genus formula, one has
\begin{equation}\label{eq:x4}2g(F)-2=|\mG| \cdot [2g(F^{\mG})-2]+up^{v}+p^{v}-2+\sum_{i=1}^{k} \frac{d_i}{e_i}\cdot |\mG|.\end{equation}
Since any fixed subfield of the rational function field is again a rational function field, we have $g(F)=g(F^\mG)=0$. Hence, it follows from \eqref{eq:x4} that $\sum_{i=1}^{k} \frac{d_i}{e_i}=\frac{u-1}{u}$. If $u=1$, then $k=0$. Otherwise we must have $k=1$ and $e_1=u$, and the place $Q_1$ splits into $p^v$ rational places in $F$.
Then there are $q-p^v$ rational places of $F$ which are unramified in $F/F^{\mG}$.
Hence, there are $(q-p^v)/(up^v)$ rational places of $F^{\mG}$ that split completely in $F/F^\mG$.

Again, it is straightforward to verify that the orders of  ramification groups of $P_{0}$ in $F/F^{\mG}$ are given by
$|\mG_0(P_{0})|= u,  \text{ and }  |\mG_i(P_{0})|= 1 \text{ for } i\ge 1.$
Thus, $P_0$ is ramified  in $F/F^\mG$ with ramification index $u$. This proves (iii).

(iv) follows from the proof of (ii).
\end{proof}

Now we can provide  explicit constructions of optimal \LRCs from the subgroups of $\AGL_2(q).$

\begin{theorem}\label{thm:4.3} Let $q=p^s$.  Let $v$ be a positive integer less than or equal to $s$.
Put $r=p^v-1$ and $n=m(r+1)$ for any positive integer $m\le q/(r+1)$.
Then for any integer $t$ with $1\le t \le m$, there exists an optimal $[n,k,d]_q$-locally repairable code with locality $r$, $k=rt$ and $d=n-k-\frac{k}{r}+2$.
\end{theorem}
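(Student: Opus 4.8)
The plan is to specialize the general construction of Proposition~\ref{prop:3.1} to the particular subgroup $\mG$ of $\AGL_2(q)$ of order $up^v$ with $u=1$, so that $|\mG|=p^v=r+1$. With $u=1$ the group $\mG$ from Equation~\eqref{up^u} is simply the elementary abelian $p$-group $\left\{\left(\begin{smallmatrix}1&b\\0&1\end{smallmatrix}\right):\; b\in W\right\}$, where $W$ is an $\F_p$-subspace of $\F_q$ of dimension $v$, acting on $F=\F_q(x)$ by $x\mapsto x+b$. First I would invoke Proposition~\ref{prop:4.2} with $u=1$: it tells us that in $F/F^\mG$ the infinite place $P_\infty$ is totally ramified, that there is \emph{no} other ramified place of $F^\mG$ (since for $u=1$ we get $k=0$), and hence that all remaining $q$ rational places of $F$ are unramified, splitting into exactly $q/(up^v)=q/(r+1)$ rational places of $F^\mG$ that split completely in $F/F^\mG$.

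Next I would set up the code exactly as in the proof of Proposition~\ref{prop:3.1}. Choose any $m\le q/(r+1)$ of those completely split rational places $Q_1,\dots,Q_m$ of $E:=F^\mG$, let $\mP=\{P_1,\dots,P_n\}$ with $n=m(r+1)$ be the rational places of $F$ lying above them, and take $P_\infty$ as the reference place outside $\mP$. Since $P_\infty$ is totally ramified, $F=\F_q(x)$ with $(x)_\infty=P_\infty$, and $E=F^\mG=\F_q(z)$ with $(z)_\infty=Q_\infty$ the restriction of $P_\infty$; as a divisor of $F$, $\deg(z)_\infty=[F:E]=r+1$. Concretely one can take $z=\prod_{b\in W}(x+b)$, the norm of $x$ down to the fixed field, which is an additive polynomial of $\F_q$-degree $p^v=r+1$. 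Then form
\[
V=\left\{\sum_{i=0}^{r-1}\left(\sum_{j=0}^{t-1}a_{ij}z^j\right)x^i:\; a_{ij}\in\F_q\right\},
\]
which has $\F_q$-dimension $rt=k$ because $\{1,x,\dots,x^{r-1}\}$ is linearly independent over $\F_q(z)$, and consider $C(\mP,V)=\{(f(P_1),\dots,f(P_n)):f\in V\}$.

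The verification is then word-for-word that of Proposition~\ref{prop:3.1}: every nonzero $f\in V$ has pole divisor at most $(t-1)(z)_\infty+(r-1)(x)_\infty$, so $\deg(f)_\infty\le (t-1)(r+1)+(r-1)=(r+1)t-2=k+\frac kr-2$, giving weight at least $n-k-\frac kr+2>0$ (the positivity is where $m\ge t$ is used), hence $\dim C(\mP,V)=k$ and $d\ge n-k-\frac kr+2$; and $C(\mP,V)$ has locality $r$ because $x$ is injective on $\mP$ (a place $P_\Ga\in\mP$ would otherwise make $x-x(P_\Ga)$ have two zeros against $\deg(x-c)_0=1$), $z$ is constant on each fiber $A_\ell$ of $r+1$ places over $Q_\ell$, and so on each fiber $f$ agrees with a polynomial $\delta(x)=\sum_{i=0}^{r-1}b_i x^i$ of degree $\le r-1$ which is recovered by Lagrange interpolation from the other $r$ coordinates in the fiber. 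Finally the Singleton-type bound \eqref{eq:x1} forces $d=n-k-\frac kr+2$, so the code is optimal.

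There is really no hard obstacle here: the whole point is that the case $u=1$ of Proposition~\ref{prop:4.2} makes the ramification maximally clean, with $P_\infty$ the unique ramified place and all of the remaining $q$ rational places available as completely split fibers, which is precisely what is needed to push $m$ up to $q/(r+1)$ (as opposed to the weaker $m\le\lfloor(q+1-2r)/(r+1)\rfloor$ of the general Proposition~\ref{prop:3.1}). The only point requiring a line of care is checking that $\mP\cap\{P_\infty\}=\emptyset$ and that $\supp((z)_\infty)\cap\mP=\emptyset$, both immediate since $(z)_\infty=(r+1)P_\infty$ and $P_\infty\notin\mP$; everything else is an application of the Riemann--Roch/degree bookkeeping already carried out in Section~\ref{sec:3}. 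I would also remark at the end that taking $W=\F_q$ (so $v=s$, $r=q-1$) recovers the Reed--Solomon-flavored instances and that general $W$ recovers the construction of \cite{TB14}, which is the promised statement that \cite{TB14} sits inside this framework.
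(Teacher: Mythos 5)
Your proposal is correct and follows essentially the same route as the paper: specialize the subgroup of Equation~\eqref{up^u} to $u=1$ so that $\mG$ is the elementary abelian $p$-group of translations by $W$, invoke Proposition~\ref{prop:4.2} to see that $P_\infty$ is the only ramified rational place and all other $q$ rational places split completely, take $z=\prod_{b\in W}(x+b)$ as a generator of $F^\mG$, and then run the argument of Proposition~\ref{prop:3.1} verbatim. The paper's proof and your write-up differ only in presentation (the paper leaves the $u=1$ specialization implicit and defers the choice of $z$ to a remark).
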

\begin{proof}
Let $F$ be the rational function field $\F_q(x)$ and let $P_{\infty}$ be the infinite place of $F$, i.e., $(x)_{\infty}=P_{\infty}$.
Let $\mG$ be the subgroup of $\AGL_2(q)$ with order $r+1=p^{v}$ constructed from Equation (\ref{up^u}). Then by Proposition \ref{prop:4.2}, except for $P_{\infty}$ all other $q$ rational places of $F$ split completely.
Denote by $\mP$ a set of  $m(r+1)$ rational places of $F$ that lie over $m$ rational places of $F^\mG$.

Let $Q_{\infty}$ be the place of $F^\mG$ that lies under $P_{\infty}$ and choose an element $z\in F^\mG$ such that $(z)_{\infty}$ is equal to $Q_{\infty}$ as a divisor of $F^\mG$. Thus, $(z)_{\infty}$ is equal to $(r+1)P_{\infty}$ as a divisor of $F$.

For $t\ge 1$, consider the set of functions
\[V:=\left\{\sum_{i=0}^{r-1}\left(\sum_{j=0}^{t-1}a_{ij}z^j\right)x^i:\; a_{ij}\in\F_q\right\}.\]
Then by mimicking  the proof of Proposition \ref{prop:3.1}, one can show that the code $C(\mP, V)$ is an optimal $[n,k,d]_q$-locally repairable code with locality $r$, $k=rt$ and $d=n-k-\frac{k}{r}+2$.
\end{proof}

\begin{rmk} In the proof of Theorem \ref{thm:4.3}, one can choose the function $z$ to be $\prod_{b\in W}(x+b)$, where $W$ is the $\F_p$-vector space of dimension $v$ used to define $\mG$ in Equation \eqref{up^u}. Thus, Theorem \ref{thm:4.3} provides the same construction as the one using additive subgroups in \cite[Proposition 3.2]{TB14}.
\end{rmk}

In Theorem \ref{thm:4.3}, we make use of a subgroup $\mG$ of $\AGL_2(q)$ that is isomorphic to an additive subgroup of $\F_q$. The following theorem will use a subgroup $\mG$ of $\AGL_2(q)$ that is isomorphic to a multiplicative  subgroup of $\F_q^*$.

\begin{theorem}\label{thm:4.4} Let $r$ be a positive integer with $(r+1)|(q-1)$.
Put $n=m(r+1)$ for any positive integer $m\le (q-1)/(r+1)$.
Then for any integer $t$ with $1\le t \le m$, there exists an optimal $[n,k,d]_q$-locally repairable code with locality $r$, $k=rt$ and $d=n-k-\frac{k}{r}+2$.
\end{theorem}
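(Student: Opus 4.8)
The plan is to mirror the structure of Theorem \ref{thm:4.3}, replacing the additive subgroup of $\F_q$ by a multiplicative subgroup of $\F_q^*$. Since $(r+1)\mid(q-1)$, the cyclic group $H=\{a\in\F_q^*:\ a^{r+1}=1\}$ has order $r+1$, and it embeds in $\AGL_2(q)$ as
\[
\mG=\left\{\left(\begin{array}{cc}a&0\\ 0&1\end{array}\right):\ a\in H\right\},
\]
so that each $\sigma\in\mG$ acts on $F=\F_q(x)$ by $\sigma(x)=ax$. This is the case $u=r+1$, $v=0$ of Equation \eqref{up^u}, so Proposition \ref{prop:4.2} already describes the ramification: the pole $P_\infty$ of $x$ and the zero $P_0$ of $x$ are the only ramified rational places of $F$ in $F/F^\mG$ (both totally ramified, with ramification index $r+1$), and the remaining $q-1$ rational places of $F$ split completely into $(q-1)/(r+1)$ rational places of $F^\mG$.

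Next I would set up the code exactly as before. Pick $Q_\infty$ the place of $F^\mG$ lying under $P_\infty$ and a generator $z\in F^\mG$ with $(z)_\infty=Q_\infty$ as a divisor of $F^\mG$ — concretely one can take $z=x^{r+1}$, which is fixed by $\mG$ and satisfies $F^\mG=\F_q(z)$ with $(z)_\infty=(r+1)P_\infty$ as a divisor of $F$. Choose a set $\mP$ of $m(r+1)$ rational places of $F$ lying over $m$ of the split places $Q_1,\dots,Q_m$ of $F^\mG$ (all disjoint from $P_\infty$ and $P_0$, which is possible since $m\le (q-1)/(r+1)$), and form
\[
V=\left\{\sum_{i=0}^{r-1}\left(\sum_{j=0}^{t-1}a_{ij}z^j\right)x^i:\ a_{ij}\in\F_q\right\}.
\]
Then $C(\mP,V)$ is claimed to be the desired optimal $[n,k,d]_q$-LRC with locality $r$, $k=rt$, $d=n-k-\tfrac kr+2$.

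The verification is the same three-part argument as in Proposition \ref{prop:3.1}: (1) $\dim_{\F_q}V=rt$ because $\{1,x,\dots,x^{r-1}\}$ is linearly independent over $\F_q(z)=\F_q(x^{r+1})$, which has index $r+1$ in $F$; (2) every nonzero $f\in V$ has pole divisor bounded by $(t-1)(z)_\infty+(r-1)(x)_\infty$, hence $\deg(f)_\infty\le (r+1)t-2$, giving Hamming weight at least $n-(r+1)t+2=n-k-\tfrac kr+2$, so $\dim C(\mP,V)=rt$ and $d\ge n-k-\tfrac kr+2$, with equality forced by the Singleton-type bound \eqref{eq:x1} once locality is established; (3) locality follows because $z$ is constant on each fiber $A_\ell$ over $Q_\ell$, so on $A_\ell$ the function $f$ restricts to the evaluation of the polynomial $\delta(x)=\sum_{i=0}^{r-1}b_ix^i$ with $b_i=\sum_j a_{ij}z^j(P_\alpha)$, a polynomial of degree $\le r-1$, and the $r+1$ values $\{x(P_\beta):P_\beta\in A_\ell\}$ are pairwise distinct (since $x-c$ has a single zero), so any one erased symbol is recovered by Lagrange interpolation from the other $r$ symbols in its block. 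I do not anticipate a genuine obstacle here — the only point requiring a moment's care is confirming that the multiplicative embedding really is the $v=0$ specialization covered by Proposition \ref{prop:4.2}, so that one may quote the split-place count $(q-1)/(r+1)$ directly; with that in hand the proof is a faithful transcription of the proof of Theorem \ref{thm:4.3} with $z=x^{r+1}$ in place of $z=\prod_{b\in W}(x+b)$.
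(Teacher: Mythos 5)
Your proposal is correct and follows essentially the same route as the paper: the paper also takes the order-$(r+1)$ multiplicative subgroup of $\F_q^*$ embedded in $\AGL_2(q)$ via Equation \eqref{up^u}, invokes Proposition \ref{prop:4.2} to see that only $P_\infty$ and $P_0$ are ramified while the remaining $q-1$ rational places split completely, chooses $z\in F^\mG$ with $(z)_\infty=Q_\infty$ (the subsequent remark even notes the explicit choice $z=x^{r+1}$), and then repeats the dimension/distance/locality argument of Proposition \ref{prop:3.1} verbatim. Nothing further is needed.
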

\begin{proof}
Let $F$ be the rational function field $\F_q(x)$ and let $P_{\infty}$ be the pole place of $x$, i.e., $(x)_{\infty}=P_{\infty}$.
Let $\mG$ be the subgroup of $\AGL_2(q)$ with order $r+1$ constructed from Equation (\ref{up^u}). Then by Proposition \ref{prop:4.2}, except for $P_{\infty}$ and the zero of $x$, all other $q-1$ rational places of $F$ split completely.
Denote by $\mP$ a set of  $m(r+1)$ rational places of $F$ that lie over $m$ rational places of $F^\mG$.

Let $Q_{\infty}$ be the place of $F^\mG$ that lies under $P_{\infty}$ and choose an element $z\in F^\mG$ such that $(z)_{\infty}$ is equal to $Q_{\infty}$ as a divisor of $F^\mG$. Thus, $(z)_{\infty}$ is equal to $(r+1)P_{\infty}$ as a divisor of $F$.

For $t\ge 1$, consider the set of functions
\[V:=\left\{\sum_{i=0}^{r-1}\left(\sum_{j=0}^{t-1}a_{ij}z^j\right)x^i:\; a_{ij}\in\F_q\right\}.\]
Then by mimicking  the proof of Proposition \ref{prop:3.1}, one can show that the code $C(\mP, V)$ is an optimal $[n,k,d]_q$-locally repairable code with locality $r$, $k=rt$ and $d=n-k-\frac{k}{r}+2$.
\end{proof}

\begin{rmk} In the proof of Theorem \ref{thm:4.3}, one can choose the function $z$ to be $\prod_{a\in H}ax=x^{r+1}$, where $H$ is the subgroup of $\F_q^*$ used to define $\mG$ in Equation \eqref{up^u}. Thus, Theorem \ref{thm:4.4} provides the same construction as the one using multiplicative subgroups in \cite[Proposition 3.2]{TB14}.
\end{rmk}

Now we consider subgroups of $\AGL_2(q)$ that mix additive subgroups of $\F_q$ and multiplicative subgroups of $\F_q^*$.

\begin{theorem}\label{thm:4.5}
Let $u\ge 2$ be a common divisor of $q-1$ and $p^v-1$ for some $1\le v\le s$. Let $r$ be a positive integer such that $r+1=up^v$.
%Let $u$ be a positive divisor of $q-1$ and $0\le v\le s$. Let $r$ be a positive integer such that $r+1=up^v$.
Put $n=m(r+1)$ for  any positive integer $m\le (q-p^v)/(up^v)$.
Then for any integer $t$ with $1\le t \le m$, there exists an optimal $[n,k,d]_q$-locally repairable code with locality $r$, $k=rt$ and $d=n-k-\frac{k}{r}+2$.
\end{theorem}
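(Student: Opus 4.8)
The plan is to mimic the proof of Proposition~\ref{prop:3.1}, using the explicit subgroup $\mG$ of $\AGL_2(q)$ of order $up^v=r+1$ constructed in Equation~\eqref{up^u} and the ramification data supplied by Proposition~\ref{prop:4.2}. First I would set up $F=\F_q(x)$ with $P_\infty$ the pole of $x$, and invoke Proposition~\ref{prop:4.2}: the place $P_\infty$ is totally ramified, exactly one other rational place $Q_1$ of $F^\mG$ is ramified (splitting into $p^v$ places of $F$, among them the zero $P_0$ of $x$), and there are precisely $(q-p^v)/(up^v)$ rational places of $F^\mG$ that split completely in $F/F^\mG$. Since $m\le (q-p^v)/(up^v)$, we may pick $m$ such completely split places $Q_1,\dots,Q_m$ of $F^\mG$ and let $\mP=\{P_{ij}\}_{1\le i\le m,\,1\le j\le r+1}$ be the $n=m(r+1)$ rational places of $F$ lying above them. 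Because $P_\infty$ is totally ramified it cannot lie in $\mP$, so $\supp(P_\infty)\cap\mP=\emptyset$.

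Next I would choose the functions. Let $Q_\infty$ be the place of $F^\mG$ below $P_\infty$; since $F^\mG$ is again rational there is $z\in F^\mG$ with $(z)_\infty=Q_\infty$ as a divisor of $F^\mG$, hence $(z)_\infty=(r+1)P_\infty$ as a divisor of $F$, and $\supp((z)_\infty)\cap\mP=\emptyset$. Form
\[
V:=\left\{\sum_{i=0}^{r-1}\left(\sum_{j=0}^{t-1}a_{ij}z^j\right)x^i:\; a_{ij}\in\F_q\right\}.
\]
Since $[\F_q(x):\F_q(z)]=[F:F^\mG]=r+1$, the set $\{1,x,\dots,x^{r-1}\}$ is linearly independent over $F^\mG=\F_q(z)$, so $\dim_{\F_q}V=rt=k$. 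For a nonzero $f\in V$ the pole divisor is bounded by $(t-1)(z)_\infty+(r-1)(x)_\infty$, giving $\deg(f)_\infty\le (t-1)(r+1)+(r-1)=(r+1)t-2<n$; hence $f$ has at most $(r+1)t-2$ zeros, the map $f\mapsto(f(P))_{P\in\mP}$ is injective, and the resulting code $C(\mP,V)$ has length $n$, dimension $k=rt$ and minimum distance $d\ge n-k-\frac kr+2$.

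Finally I would verify locality $r$, exactly as in Proposition~\ref{prop:3.1}: for distinct $P_\alpha,P_\beta\in\mP$ one has $x(P_\alpha)\ne x(P_\beta)$, since otherwise $x-c$ would have two zeros, contradicting $\deg(x-c)_0=\deg(x-c)_\infty=1$. Grouping $\mP$ into the blocks $A_\ell$ of $r+1$ places lying over $Q_\ell$, and using that $z\in F^\mG$ is constant on each $A_\ell$, one gets $f(P_\beta)=\sum_{i=0}^{r-1}b_i x^i(P_\beta)$ for $P_\beta\in A_\ell$ with $b_i=\sum_j a_{ij}z^j(P_\alpha)$ fixed; the degree-$(r-1)$ polynomial $\sum_i b_i x^i$ is then recovered by Lagrange interpolation from the $r$ values at $A_\ell\setminus\{P_\alpha\}$, so any erased symbol is repaired from $r$ others. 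With locality $r$ in hand, the Singleton-type bound~\eqref{eq:x1} forces $d=n-k-\frac kr+2$, and $C(\mP,V)$ is the desired optimal code. The only genuinely non-routine ingredient is the ramification count in Proposition~\ref{prop:4.2}(ii)—specifically that the number of completely split rational places of $F^\mG$ is exactly $(q-p^v)/(up^v)$, which is what pins down the admissible range $m\le (q-p^v)/(up^v)$; everything after that is a direct transcription of the argument already given for Proposition~\ref{prop:3.1} and Theorems~\ref{thm:4.3} and~\ref{thm:4.4}.
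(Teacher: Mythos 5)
Your proposal is correct and follows exactly the route the paper intends: the paper's "proof" of Theorem~\ref{thm:4.5} is just the one-line remark that one imitates Theorems~\ref{thm:4.3} and~\ref{thm:4.4} with the order-$up^v$ subgroup from Equation~\eqref{up^u}, using the ramification count from Proposition~\ref{prop:4.2}(ii) to get the bound $m\le(q-p^v)/(up^v)$, and then transcribing the argument of Proposition~\ref{prop:3.1}. The only cosmetic slip is a naming collision: you first call the partially ramified place of $F^\mG$ ``$Q_1$'' and then re-use $Q_1,\dots,Q_m$ for the completely split places; this should be fixed, but it does not affect the mathematics.
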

One can imitate the proof of Theorem \ref{thm:4.3} or Theorem \ref{thm:4.4} by considering a subgroup of $\AGL_2(q)$ of size $up^v$ defined in Equation (\ref{up^u}). We skip the details.

%\begin{rmk} We could take the function $z$ to be $z=\prod_{a\in H, b\in W}(ax+b)$, where $H,W$ are  used to define $\mG$ in the Equation \eqref{up^u} for the code.
%\end{rmk}

\begin{ex} {\rm
 \begin{itemize}
 \item[{\rm (i)}]
Let $q=27$. If we let $r=2$ and $m=9$ in  Theorem \ref{thm:4.3}, then we can explicitly construct an optimal $27$-ary $[27,2t,29-3t]$-\LRC with locality $2$ for any $1\le t\le 9$. Note that the code length gets enlarged compared with Example \ref{ex:3.3}(i). This is because we know the ramification situation.
If we let $r=12$ and $m=2$ in  Theorem \ref{thm:4.4}, then we can explicitly construct an optimal $27$-ary $[26,12t,28-13t]$-\LRC with locality $12$ for any $1\le t\le 2$.
If we let $r=5$ and $m=4$ in  Theorem \ref{thm:4.5}, then we can explicitly construct an optimal $27$-ary $[24,5t,26-6t]$-\LRC with locality $5$ for any $1\le t\le 4$.
\item[{\rm (ii)}] Let $q=64$. If we let $r=3$ and $m=16$ in  Theorem \ref{thm:4.3}, then we can explicitly construct an optimal $64$-ary $[64,3t,66-4t]$-\LRC with locality $3$ for any $1\le t\le 16$. Note that the code length gets enlarged compared with Example \ref{ex:3.3}(ii).
If we let $r=8$ and $m=7$ in  Theorem \ref{thm:4.4}, then we can explicitly construct an optimal $64$-ary $[63,8t,65-9t]$-\LRC  with locality $8$ for any $1\le t\le 7$.
If we let $r=11$ and $m=5$ in  Theorem \ref{thm:4.5}, then  an optimal $64$-ary $[60,11t,62-12t]$-\LRC  with locality $11$ for any $1\le t\le 5$ can be constructed.
\end{itemize}
}\end{ex}

\section{Explicit construction of LRC codes of length $n=q+1$}
In this section, we give an explicit construction of optimal \LRCs of length $n=q+1$ via the subgroups of a cyclic group of order $q+1$ of $\Aut(F/\F_q)$. First of all, we provide an explicit characterization of the fixed subfields of the rational function field $\F_q(x)$ with respect to the subgroups of a cyclic group of order $q+1$.

\begin{lemma}\label{subgroupofq+1}
Let $F$ be the rational function field $\F_q(x)$.  Let $f(x)$ be a quadratic primitive polynomial $x^2+ax+b\in \F_q[x]$ of order $q^2-1$. Then its companion matrix $A_f=\left(\begin{array}{cc}0 & 1 \\-b & -a\end{array}\right)$ generates a cyclic group of order $q+1$  in $\PGL_2(q)$. Furthermore, let
 $\mG$ be a subgroup of $\langle A_f\rangle$. Then we have
$F^{\mG}=\F_q(\mbox{\Tr}(x)),$  where $\mbox{\Tr}(x)=\sum_{\Gt\in \mG} \Gt(x)$.
\end{lemma}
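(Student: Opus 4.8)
The plan is to play the two descriptions of $F^{\mG}$ against each other. By the Galois correspondence $[F:F^{\mG}]=|\mG|$, while $\F_q(\Tr(x))$ is an obviously $\mG$-invariant subfield; so it suffices to show $\F_q(\Tr(x))\subseteq F^{\mG}$ together with $[F:\F_q(\Tr(x))]\le|\mG|$, since then $[F^{\mG}:\F_q(\Tr(x))]=1$. The degree $[F:\F_q(\Tr(x))]$ I will compute as $\deg(\Tr(x))_{\infty}$ (the pole divisor being taken in $F$), so the whole argument reduces to a pole count for $\Tr(x)$, for which I first need enough control on how $\mG$ permutes the rational places.

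First I would record the ambient group structure. Since $f(x)=x^2+ax+b$ is primitive of order $q^2-1$, it is irreducible over $\F_q$ with roots $\alpha,\alpha^q\in\F_{q^2}$ generating $\F_{q^2}^*$, and $A_f$ has characteristic polynomial $f$, hence over $\F_{q^2}$ is conjugate to $\mathrm{diag}(\alpha,\alpha^q)$. Thus $A_f^{\,j}$ is a scalar matrix iff $\alpha^{j}=\alpha^{qj}$, i.e. iff $\alpha^{j(q-1)}=1$, i.e. iff $(q+1)\mid j$; hence the image of $A_f$ in $\PGL_2(q)$ has order exactly $q+1$ and $\langle A_f\rangle$ is cyclic of order $q+1$ there. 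For the same reason, every non-identity $\tau$ in a subgroup $\mG\le\langle A_f\rangle$ is the image of some $A_f^{\,j}$ with $(q+1)\nmid j$, so its eigenvalues $\alpha^{j},\alpha^{qj}$ lie outside $\F_q$, and therefore $\tau$, as a fractional linear map, fixes no $\F_q$-rational point of $\PP^1$. Equivalently $\mG$ acts without fixed points on the $q+1$ rational places of $F$; in particular the orbit $\{\tau(P_{\infty}):\tau\in\mG\}$ of the pole $P_{\infty}$ of $x$ consists of $|\mG|$ pairwise distinct rational places.

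Next I would check the easy inclusion and then the pole count. For $\sigma\in\mG$, reindexing the group gives $\sigma(\Tr(x))=\sum_{\tau\in\mG}(\sigma\tau)(x)=\Tr(x)$, so $\Tr(x)\in F^{\mG}$ and $\F_q(\Tr(x))\subseteq F^{\mG}$. Each summand $\tau(x)=\frac{a'x+b'}{c'x+d'}$ again generates $F$ over $\F_q$ (as $\tau$ is an $\F_q$-automorphism with $F=\F_q(x)$), so it has a single simple pole, located at $\tau(P_{\infty})$ because $\nu_{\tau(P)}(\tau(g))=\nu_P(g)$. Since the places $\tau(P_{\infty})$ are distinct, at $P_{\infty}$ only the term $\tau=\mathrm{id}$ contributes a pole, so $\Tr(x)$ genuinely has a (simple) pole at $P_{\infty}$, hence is non-constant; by $\mG$-invariance it then has a simple pole at each of the $|\mG|$ places $\sigma(P_{\infty})$, while conversely $(\Tr(x))_{\infty}\le\sum_{\tau}(\tau(x))_{\infty}=\sum_{\tau}\tau(P_{\infty})$. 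Hence these are exactly the poles, each simple, and $\deg(\Tr(x))_{\infty}=|\mG|$, i.e. $[F:\F_q(\Tr(x))]=|\mG|$. Combined with $\F_q(\Tr(x))\subseteq F^{\mG}$ and $[F:F^{\mG}]=|\mG|$, this forces $F^{\mG}=\F_q(\Tr(x))$.

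The only genuinely delicate point is the non-cancellation in the pole divisor of $\Tr(x)$: a priori the simple poles of the summands $\tau(x)$ could interfere. This is precisely where the fixed-point-free action of $\mG$ — a consequence of the primitivity of $f$ — enters: the poles sit at $|\mG|$ distinct places, so no cancellation occurs and the pole divisor has degree exactly $|\mG|$. Everything else is routine bookkeeping with valuations in the rational function field together with the Galois correspondence.
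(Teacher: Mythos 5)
Your proof is correct, and its overall skeleton coincides with the paper's: diagonalize $A_f$ over $\F_{q^2}$ to read off the order in $\PGL_2(q)$; observe $\Tr(x)\in F^{\mG}$ by reindexing the sum; identify $[F:\F_q(\Tr(x))]$ with $\deg(\Tr(x))_{\infty}$; show this degree equals $|\mG|$; and finish with the Galois identity $[F:F^{\mG}]=|\mG|$. Where you genuinely diverge is in the crucial step — proving no cancellation occurs among the simple poles of the summands $\Gt(x)$, $\Gt\in\mG$. The paper does this by brute force: it derives the closed form $\Gs^k(x)=\frac{-ba_{k-1}x+a_k}{-ba_kx+a_{k+1}}$ from the Lucas-type sequence $a_k=(\Ga^{qk}-\Ga^k)/(\Ga^q-\Ga)$, checks $a_k\neq 0$ for $1\le k\le q$ (so the poles are finite), and then verifies by direct computation that the pole abscissae $a_{k+1}/(ba_k)$ are pairwise distinct. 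You replace all of that by a conceptual observation: any non-identity element of $\langle A_f\rangle$ has characteristic polynomial $\F_q$-irreducible (its roots $\Ga^j,\Ga^{qj}$ lie outside $\F_q$ when $(q+1)\nmid j$), hence fixes no $\F_q$-rational place, so $\mG$ acts freely on the $q+1$ rational places and the orbit $\{\Gt(P_{\infty}):\Gt\in\mG\}$ automatically has $|\mG|$ distinct members. This is cleaner and scales better (it is really the standard ``non-split maximal torus acts freely on $\PP^1(\F_q)$'' fact). What the paper's explicit computation buys, and your proof does not, is the formula \eqref{sigma_k} for $\Gs^k(x)$, which the paper reuses later to write down the explicit function $z=\Tr(x)$, the splitting rational places, and the generator matrix in Example~\ref{ex:5.4}. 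So the two arguments are interchangeable for the lemma itself, but the paper's version doubles as a computational tool for the subsequent explicit constructions.
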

\begin{proof}

Let us first prove that  $A_f$ has order $q+1$ in  $\PGL_2(q)$.

Let $\Ga$ and $\Ga^q$ be the two distinct roots of $f(x)$ in $\F_{q^2}$. Then $\text{ord}(\Ga)=q^2-1$, $\Ga+\Ga^q=-a$ and $\Ga^{q+1}=b$. Put $P=\left(\begin{array}{cc}1 & 1 \\ \Ga & \Ga^q\end{array}\right)$. Then we have
$P^{-1}A_fP=\left(\begin{array}{cc}\Ga & 0 \\0 & \Ga^q\end{array}\right).$
It is easy to verify that $q+1$ is the smallest positive integer $k$ such that $\Ga^k=\Ga^{qk}=c$ for some $c\in \F_q^*$.
Thus, $\left(\begin{array}{cc}\Ga & 0 \\0 & \Ga^q\end{array}\right)$ has order $q+1$ and hence $A_f$ generates a cyclic group of order $q+1$ in $\PGL_2(q)$.

Let $a_k=(\Ga^{qk}-\Ga^k)/(\Ga^q-\Ga)$. Then we obtain
\begin{eqnarray*}
A_f^k &=&P\cdot \left(\begin{array}{cc}\Ga^k & 0 \\0 & \Ga^{qk}\end{array}\right) \cdot P^{-1} \\
&=& \frac{1}{\Ga^q-\Ga}\left(\begin{array}{cc}\Ga^{q+k}-\Ga^{qk+1} & \Ga^{qk}-\Ga^k \\ \Ga^{q+k+1}-\Ga^{q(k+1)+1}  & \Ga^{q(k+1)}-\Ga^{k+1} \end{array}\right)\\
&=& \left(\begin{array}{cc}-ba_{k-1} & a_k \\-ba_k & a_{k+1}\end{array}\right).
\end{eqnarray*}
Moreover, it is easy to see that $a_{k}$ satisfies the following recursive formula
$$a_{k+2}+aa_{k+1}+ba_k =0 \text{ with } a_0=0, a_1=1.$$
Let $\Gs$ be the automorphism of $\F_q(x)$ corresponding to the matrix $A_f$, i.e.,
$\Gs(x)=1/(-b x-a)$. Then the order of $\Gs$ is $q+1$.
For $1\le k \le q$, we have
\begin{equation}\label{sigma_k}
\Gs^k(x)=\frac{-ba_{k-1}x+a_k}{-b a_k x+a_{k+1}}.
\end{equation}
It is easy to verify that $a_k=0$ if and only if $\Ga^{k}=\Ga^{qk}$, i.e., $(q+1)|k$. Hence, $a_k\neq 0$ for all $1\le k\le q$.
By a direct computation, one has  $$\frac{a_{k+1}}{ba_k}\neq \frac{a_{j+1}}{ba_j}$$ for all $1\le k\neq j \le q$.

Let $\mG$ be a subgroup of $\langle \Gs \rangle$.
Then $\mbox{Tr}(x)=\sum_{\Gt\in \mG} \Gt(x)\in F^{\mG}$ and $\deg(\mbox{Tr}(x))_{\infty}=|\mG|$.
By  \cite[Theorem 1.4.10]{St09}, we have $$|\mG|=[F:F^{\mG}]\le [F:\F_q(\mbox{Tr}(x))]=\deg(\mbox{Tr}(x))_{\infty}=|\mG|.$$
Hence, $F^{\mG}=\F_q(\mbox{Tr}(x)).$
\end{proof}

The ramification properties of $F/F^{\mG}$ are given in the following proposition and the number of the ramified rational places of $F$ in $F/F^{\mG}$ can be completely determined.

\begin{prop}\label{prop:5.2}
Let $\Gs$ be an automorphism of $F$ of corresponding to the matrix $A_f$ defined in Lemma \ref{subgroupofq+1}.
Let $\mG$ be a subgroup of $\langle \Gs \rangle$ of order $r+1$ such that $(r+1)|(q+1)$. Then the extension $F/F^{\mG}$ has the following properties.
\begin{itemize}
\item [(i)] $[F:F^{\mG}]=r+1$.
\item [(ii)] There is a unique place $Q$ of degree $2$ of $F$ which is totally ramified in $F/F^{\mG}$ with ramification index $e_Q(F/F^{\mG})=r+1$ and different exponent  $d_Q(F/F^{\mG})=r$. There are exactly $(q+1)/(r+1)$ rational places of $F^{\mG}$ which split completely in  $F/F^{\mG}$.
\item [(iii)] All other places of $F$ are unramified in  $F/F^{\mG}$.
\end{itemize}
\end{prop}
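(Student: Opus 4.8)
The plan is to reduce everything to the Hurwitz genus formula applied to the Galois extension $F/F^{\mG}$, using the fact that $F^{\mG}$ is again a rational function field (so $g(F)=g(F^{\mG})=0$) together with Proposition \ref{prop:2.3} to control where ramification can occur. Part (i) is immediate from the Galois correspondence since $|\mG|=r+1$. For the rest, I would first observe that $\mG$ is a cyclic subgroup of the cyclic group $\langle\Gs\rangle$ of order $q+1$, hence $\mG$ itself is cyclic of order $r+1$ with $(r+1)\mid(q+1)$; in particular $\gcd(r+1,p)=1$, so every ramified place in $F/F^{\mG}$ is tamely ramified and the second and higher ramification groups vanish. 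Combined with Proposition \ref{prop:2.3}, a place of $F$ can ramify in $F/F^{\mA}$ only if it is rational (wildly, with non-trivial $\mG_1$) or of degree $2$ (tamely, with $\mG_0$ cyclic of order $q+1$). Since $\mG$ is a subgroup of the inertia group, tameness of $F/F^{\mG}$ rules out any rational place being ramified: a rational place $P$ has $\mG_1(P)$ inside $\mA$ equal to a $p$-group, so $\mG\cap\mG_1(P)$ is trivial, but the inertia group of $P$ in $F/F^{\mG}$ is $\mG\cap\mG_0(P)$, and for a cyclic tame group this sits inside $\mG_0(P)/\mG_1(P)$; one checks it must already be trivial unless $P$ is the degree-$2$ place fixed by $\langle\Gs\rangle$. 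More directly: the degree-$2$ places of $F$ on which $\langle\Gs\rangle$ acts are permuted, and the unique one whose stabilizer contains all of $\langle\Gs\rangle$ (equivalently, the place corresponding to the irreducible polynomial $f(x)$, i.e.\ the common ``fixed place'' of the diagonalized action $\mathrm{diag}(\Ga,\Ga^q)$) has full inertia group $\mG$ in $F/F^{\mG}$.

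So the strategy for (ii) is: let $Q$ be the degree-$2$ place of $F$ lying under the pole and zero of the diagonalizing coordinate (concretely, $Q$ is the place associated to $f(x)$, or equivalently the place at which $\mathrm{diag}(\Ga,\Ga^q)$ acts as a scalar on the cotangent space only in the limit — I would just say $Q$ is the unique place fixed by every element of $\mG$). Since $\mG$ acts on the fiber over $Q|_{F^{\mG}}$ transitively and fixes $Q$, $Q$ is totally ramified, $e_Q(F/F^{\mG})=r+1$, and by tameness $d_Q(F/F^{\mG})=e_Q-1=r$ from Hilbert's Different Theorem (only $\mG_0$ contributes, with $|\mG_0|-1=r$). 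I then plug this into the Hurwitz genus formula for $F/F^{\mG}$:
\[
-2 = (r+1)\cdot(-2) + \deg Q\cdot d_Q(F/F^{\mG}) + \sum_{\text{other ramified }P} d_P(F/F^{\mG})\deg(P),
\]
i.e.\ $-2 = -2(r+1) + 2r + \sum(\cdots)$, which forces the remaining sum to be $0$, hence $Q$ is the \emph{only} ramified place. This proves (iii) simultaneously. For the splitting count: among the $q+1$ rational places of $F$, none is ramified by the above, so each lies in a fiber over $F^{\mG}$ that the cyclic group $\mG$ permutes freely; transitivity of the Galois action on each fiber (\cite[Theorem 3.7.1]{St09}) then says each such fiber has exactly $r+1$ rational places and corresponds to a rational place of $F^{\mG}$ splitting completely. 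Hence the number of completely split rational places of $F^{\mG}$ is $(q+1)/(r+1)$, giving the last assertion of (ii).

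The main obstacle I anticipate is pinning down cleanly \emph{which} place ramifies and verifying it has degree $2$ — i.e.\ justifying that the unique ramified place is exactly the degree-$2$ place $Q$ cut out by $f(x)$, rather than arguing only that \emph{some} place ramifies. The cleanest route is the one already half-present in Lemma \ref{subgroupofq+1}: over $\F_{q^2}$, $A_f$ diagonalizes to $\mathrm{diag}(\Ga,\Ga^q)$, whose corresponding automorphism $y\mapsto (\Ga/\Ga^q)\,y$ of $\F_{q^2}(y)$ fixes precisely the two places $y=0$ and $y=\infty$; these are conjugate over $\F_q$ and descend to a single place $Q$ of $F$ of degree $2$, totally (tamely) ramified because a faithful cyclic action of order $r+1$ coprime to $p$ on the completed local ring is ramified. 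One should double-check that $\deg Q = 2$ (not $1$), which follows because $\Ga\notin\F_q$ since $f$ is irreducible of degree $2$. Everything else — the genus computation, the different exponent, the splitting count — is then routine bookkeeping with the Hurwitz formula, exactly parallel to the proof of Proposition \ref{prop:3.1} but with the single degree-$2$ ramified place replacing the estimate $s\le 2r$.
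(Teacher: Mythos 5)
Your proposal is correct and follows essentially the same route as the paper: the paper also identifies the unique degree-$2$ place $Q$ totally ramified under the cyclic group (quoting Proposition \ref{prop:2.3} for $F/F^{\langle\Gs\rangle}$ rather than diagonalizing $A_f$ over $\F_{q^2}$), uses tameness to get $d_Q=r$, and then lets the Hurwitz genus formula force $\deg\mathrm{Diff}(F/F^{\mG})=2r=\deg(rQ)$, so that $Q$ is the only ramified place and the $q+1$ rational places of $F$ give $(q+1)/(r+1)$ completely splitting places of $F^{\mG}$. One small slip that does not affect the argument: the fixed points of $\Gs(x)=1/(-bx-a)$ are $\Ga^{-1},\Ga^{-q}$, so $Q$ is cut out by the reciprocal polynomial $bx^2+ax+1$ rather than by $f(x)$ itself.
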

\begin{proof}
From Proposition \ref{prop:2.3}, there is a unique place $Q$ of degree $2$ of $F$ which is totally ramified in $F/F^{\langle \Gs \rangle}$ with ramification index $e_Q(F/F^{\langle \Gs \rangle})=q+1$. Since $\mG$ is a subgroup of $\langle \Gs \rangle$, then $Q$ is totally ramified in
$F/F^{\mG}$ with ramification index $e_Q(F/F^{\mG})=r+1$.
It is easy to see that $r+1$ is relative prime to the characteristic of $\F_q$, i.e., $Q$ is tamely ramified in $F/F^{\mG}$. Hence, the different exponent  $d_Q(F/F^{\mG})=e_Q(F/F^{\mG})-1=r$.

From the Hurwitz genus formula, we have $$2g(F)-2=|\mG| (2g(F^{\mG})-2)+\deg \text{Diff}(F/F^{\mG}).$$
It follows that $\deg \text{Diff}(F/F^{\mG})=2r=\deg(rQ)$. Hence, $Q$ is the unique ramified place in $F/F^{\mG}$.
Then there are $q+1$ rational places of $F$ which are unramified in  $F/F^{\mG}$. Hence, there are exactly $(q+1)/(r+1)$ rational places of $F^{\mG}$ which split completely in  $F/F^{\mG}$.
\end{proof}

Now we can provide an explicit construction of optimal \LRCs with length $n=q+1$ from the subgroups of a cyclic subgroup of order $q+1$ of $\Aut(F/\F_q)$.

\begin{theorem}\label{q+1}
Let $r$ be a positive integer such that $(r+1)|(q+1)$. Put $n=m(r+1)$ for any positive integer $m\le \frac{q+1}{r+1}$. Then for any integer $t$ with  $1\le t\le m$, there exists an optimal $q$-ary $[n,k,d]$-\LRC with locality $r$,  $k=rt$ and $d=n-k-\frac kr+2$.
\end{theorem}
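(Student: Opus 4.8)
The plan is to mimic the proof of Proposition~\ref{prop:3.1}, now using the explicit ramification data from Proposition~\ref{prop:5.2} instead of the crude Hurwitz-genus estimate $s\le 2r$. First I would set up the rational function field $F=\F_q(x)$, pick a quadratic primitive polynomial $f(x)$ of order $q^2-1$, and let $\mG=\langle A_f\rangle$ (or the appropriate subgroup of order $r+1$ of the cyclic group of order $q+1$ generated by $A_f$, using $(r+1)\mid(q+1)$). By Lemma~\ref{subgroupofq+1} the fixed field is $F^{\mG}=\F_q(\Tr(x))$ with $\Tr(x)=\sum_{\tau\in\mG}\tau(x)$, so $F^{\mG}$ is again rational of genus $0$. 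By Proposition~\ref{prop:5.2}, the only place of $F$ ramified in $F/F^{\mG}$ is a single place $Q$ of degree~$2$, and exactly $(q+1)/(r+1)$ rational places of $F^{\mG}$ split completely, yielding $q+1$ rational places of $F$ partitioned into $(q+1)/(r+1)$ fibers of size $r+1$. For $m\le (q+1)/(r+1)$, choose $m$ of these fibers and let $\mP=\{P_{ij}\}_{1\le i\le m,\,1\le j\le r+1}$ be the resulting $n=m(r+1)$ rational places of $F$; note that since $Q$ has degree~$2$, \emph{every} rational place of $F$ is unramified, so $\mP$ is simply any union of $m$ complete fibers.

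Next I would build the evaluation space. Choose $z\in F^{\mG}$ with $(z)_\infty$ a single rational place $Q_\infty$ of $F^{\mG}$ lying under some rational place $P_\infty$ of $F$ not in $\mP$ — this is possible because $F^{\mG}$ is rational and $P_\infty$ is unramified, so $(z)_\infty=Q_\infty$ as a divisor of $F^{\mG}$ becomes $(z)_\infty=(r+1)P_\infty$ as a divisor of $F$ (degree $[F:F^{\mG}]=r+1$). Pick $x'\in F$ with $(x')_\infty=P_\infty$ and $F=\F_q(x')$; for notational convenience one can keep calling it $x$. Then set
\[
V:=\left\{\sum_{i=0}^{r-1}\left(\sum_{j=0}^{t-1}a_{ij}z^j\right)x^i\in F:\; a_{ij}\in\F_q\right\},
\]
which has $\F_q$-dimension $rt$ since $\{1,x,\dots,x^{r-1}\}$ is linearly independent over $\F_q(z)$ (as $[\F_q(x):\F_q(z)]=r+1>r-1$). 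Every nonzero $f\in V$ has pole divisor bounded by $(t-1)(z)_\infty+(r-1)(x)_\infty$, so $\deg(f)_\infty\le(t-1)(r+1)+(r-1)=(r+1)t-2=k+t-2$, giving at most $k+\tfrac kr-2$ zeros among the $P_{ij}$; hence $C(\mP,V)=\{(f(P))_{P\in\mP}:f\in V\}$ has length $n$, dimension $k=rt$, and minimum distance $d\ge n-k-\tfrac kr+2$.

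Finally I would verify locality $r$: on each fiber $A_\ell=\{P_{\ell,1},\dots,P_{\ell,r+1}\}$ the function $z$ is constant (it lies in $F^{\mG}$ and the fiber lies over one place $Q_\ell$ of $F^{\mG}$), so for $f=\sum_i(\sum_j a_{ij}z^j)x^i$ and any $P\in A_\ell$ we get $f(P)=\sum_{i=0}^{r-1}b_i\,x^i(P)$ with $b_i=\sum_j a_{ij}z^j(Q_\ell)$ depending only on $\ell$; thus the restriction of the codeword to $A_\ell$ is the evaluation of a polynomial $\delta(x)=\sum_{i=0}^{r-1}b_ix^i$ of degree $\le r-1$ at the $r+1$ distinct values $\{x(P)\}_{P\in A_\ell}$ (distinct because $x-c$ has a single zero, $\deg(x-c)_0=\deg(x)_\infty=1$), so any one erased symbol in $A_\ell$ is recovered by Lagrange interpolation from the other $r$. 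Hence $C(\mP,V)$ is an $[n,k,d]_q$-\LRC with locality $r$, and by the Singleton-type bound \eqref{eq:x1} we conclude $d=n-k-\tfrac kr+2$, so the code is optimal. The main obstacle, and the only place where this proof genuinely differs from Proposition~\ref{prop:3.1}, is confirming that the single ramified place $Q$ has degree~$2$ rather than being rational — this is exactly the content of Proposition~\ref{prop:5.2}(ii)--(iii) via Proposition~\ref{prop:2.3}(ii), and it is precisely what allows the length to reach the full $q+1$; everything else is the routine argument already carried out above.
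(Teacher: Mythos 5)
Your proof is correct for $m<\frac{q+1}{r+1}$ and in that range it coincides with what Proposition~\ref{prop:3.1} already gives; but it breaks down precisely in the case $m=\frac{q+1}{r+1}$, which is the one that actually makes this theorem interesting (length $n=q+1$). When $m=\frac{q+1}{r+1}$, Proposition~\ref{prop:5.2} says all $q+1$ rational places of $F$ are unramified and therefore all of them belong to $\mP$. Your construction hinges on the sentence ``Choose $z\in F^{\mG}$ with $(z)_\infty$ a single rational place $Q_\infty$ of $F^{\mG}$ lying under some rational place $P_\infty$ of $F$ not in $\mP$.'' No such $P_\infty$ exists: the only place of $F$ outside $\mP$ is the degree-$2$ ramified place $Q$, and you cannot take $P_\infty=Q$ because you need $(x')_\infty=P_\infty$ to be a single degree-$1$ place in order to have $F=\F_q(x')$ and to run the interpolation argument on the evaluations of $x'$.

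To close the gap you must allow $\supp(G)\cap\mP\neq\emptyset$ and switch to the modified algebraic geometry code of Section~\ref{AGcode} (equations \eqref{eq:10}--\eqref{eq:11}). Concretely: pick $z$ with $(z)_\infty=Q_1$ (so $(z)_\infty=P_1+\cdots+P_{r+1}$ in $F$, inside $\mP$), pick $y$ with $(y)_\infty=P_{r+2}$ (also inside $\mP$), and evaluate $\pi_i^{m_i}f$ at $P_i$ with $\pi_1=\cdots=\pi_{r+1}=1/z$, $\pi_{r+2}=1/y$. This preserves the dimension and distance bounds, but the locality check then needs two extra sub-arguments: on the fiber $A_1$ the codeword entries are $(\pi_1^{t-1}f)(P_\alpha)$, which collapse the interpolating polynomial to $\sum_i a_{i,t-1}y^i$; and on the fiber $A_2$ the entry at $P_{r+2}$ becomes the leading coefficient $b_{r-1}$, so one has to verify separately that this symbol and the remaining symbols of $A_2$ can each be recovered from the other $r$. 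This is exactly the additional work the paper carries out for cases $h=1$ and $h=2$, and it is the genuinely new content of the theorem beyond Proposition~\ref{prop:3.1}; your writeup treats it as ``routine,'' but without it the maximal-length case is not proved.
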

\begin{proof}
We prove the result only for $m=\frac{q+1}{r+1}$. This case gives the largest length $n=q+1$. The reader may refer to Proposition \ref{prop:3.1}  for the proof of the case where $m<\frac{q+1}{r+1}$.

Let $F$ be the rational function field $\F_q(x)$ and let $\mG$ be a subgroup of the cyclic group of order $q+1$ defined in Lemma  \ref{subgroupofq+1} and Proposition \ref{prop:5.2}. In fact, there are exactly $m$ rational places $\{Q_1, \cdots, Q_m\}$ of $F^{\mG}$ which split completely in  $F/F^{\mG}$.
Let $A_j=\{P_{(j-1)(r+1)+1}, \cdots, P_{j(r+1)}\}$ be the set of rational places of $F$  lying over $Q_j$ for  $1\le j\le m$.
Choose an element $z$ of $F^\mG$ such that $(z)_{\infty}=Q_1$ in $F^\mG$. Then we must have $F^\mG=\F_q(z)$.
It is easy to see that $(z)_{\infty}=P_{1}+\cdots+P_{r+1}$ in $F$ and $z$ is a constant function on each $A_j$  since $z(P_{(j-1)(r+1)+\ell})=z(Q_j)$ for every $1\le \ell \le r+1$ and $2\le j\le m$.

Choose an element $y\in F$ such that $(y)_{\infty}=P_{r+2}$ and $F=\F_q(y)$.
For $t\ge 1$, consider the set of functions
\[V:=\left\{\sum_{i=0}^{r-1}\left(\sum_{j=0}^{t-1}a_{ij}z^j\right)y^i:\; a_{ij}\in\F_q\right\}.\]
Let $G=(t-1)(z)_{\infty}+(r-1)(y)_{\infty}=(t-1)(P_{1}+\cdots P_{r+1})+(r-1)P_{r+2}$.
Then $V$ is a subspace of the Riemann-Roch space $\mL(G)$.% and  $\deg(G)=tr+t-2$.

Let $m_i=\nu_{P_i}(G)$, then $m_1=m_2=\cdots=m_{r+1}=t-1$, $m_{r+2}=r-1$ and $m_{r+3}=\cdots=m_n=0$. Put $\pi_1=\cdots=\pi_{r+1}=1/z$ and $\pi_{r+2}=1/y$.  Consider the subcode of the modified algebraic geometry code
$$C(\mP,V)=\{((\pi_1^{m_1}f)(P_1),\dots,(\pi_{r+2}^{m_{r+2}}f)(P_{r+2}),f(P_{r+3}),\dots,f(P_n))):\; f\in V\}.$$
It is easy to see that the code $C(\mP,V)$ is an $[n,k,d]$ code with $k=rt$ and $d\ge n-k-\frac kr+2$.

It remains to prove that the code $C(\mP,V)$ has locality $r$. Denote by $(c_1,c_2,\dots,c_n)$ the codeword $((\pi_1^{m_1}f)(P_1),\dots,(\pi_{r+2}^{m_{r+2}}f)(P_{r+2}),f(P_{r+3}),\dots,f(P_n))$ for some $f=\sum_{i=0}^{r-1}$ $\left(\sum_{j=0}^{t-1}a_{ij}z^j\right)y^i\in V$ with $a_{ij}\in\F_q$.
Suppose that the erased symbol of the codeword is $c_{\Ga}=(\pi_{\Ga}^{m_{\Ga}}f)(P_{\Ga})$, where $P_{\Ga}\in A_h$ with $1\le h\le m$.
For $3\le h\le m$, the locality property follows from the proof of Proposition \ref{prop:3.1}.
For $h=1$, define the decoding polynomial $$\Gd(y)=\sum_{i=0}^{r-1} \sum_{j=0}^{t-1}a_{i,j}\pi_1^{t-1} z^j(P_{\Ga}) y^i=\sum_{i=0}^{r-1} a_{i,t-1}y^i.$$
Then as we have shown in Proposition \ref{prop:3.1}, $(\pi_{1}^{t-1}f)(P_{\Ga})=\Gd(y)(P_{\Ga})$ can be recovered from the other $r$ symbols $c_{\Gb}=(\pi_{1}^{t-1}f)(P_{\Gb})$ for $P_{\Gb}\in A_1\setminus \{P_{\Ga}\}$ by the Lagrange interpolation.
For $h=2$, since $\sum_{j=0}^{t-1} a_{ij}z^j$ is constant on $A_2$, we let $b_i=\sum_{j=0}^{t-1} a_{ij}z^j(P_{r+2})$ for $0\le i \le r-2$.  Then we have
$$(c_{r+2}, c_{r+3}, \cdots, c_{2r+2})=\Big{(}b_{r-1}, \sum_{i=0}^{r-1}b_{i}y^i(P_{r+3}), \cdots, \sum_{i=0}^{r-1}b_{i}y^i(P_{2r+2})\Big{)}.$$
%i.e., $b_{r-1}=c_{r+2}$ and $\sum_{i=0}^{r-1}b_{i}y^i(P_{\ell})=c_{\ell}$ for $r+3\le \ell \le 2r+2.$\\
{\bf Case 1}: the erased symbol is $c_{r+2}$ at $P_{r+2}$. Since $y(P_{\ell})$ are pairwise distinct for $r+3\le \ell \le 2r+2$, the coefficients $b_i$ for $0\le i \le r-1$ can be determined by the Lagrange interpolation. Hence, the erased symbol $c_{r+2}$ can be recovered.\\
{\bf Case 2}: the erased symbol is $c_{w}$ at $P_{w}$ with  $r+3\le w \le 2r+2$. Then we have
$$\sum_{i=0}^{r-2}b_{i}y^i(P_{\ell})=c_{\ell}-b_{r-1}y^{r-1}(P_{\ell})$$
for each $\ell$ with $r+3\le \ell \neq w \le 2r+2.$
Hence, the coefficients $b_0, \cdots, b_{r-2}$ can be determined by the Lagrange interpolation, i.e., the erased symbol $c_{w}$ can also be recovered from $c_{w}=\sum_{i=0}^{r-1}b_{i}y^i(P_{w}).$
\end{proof}

\begin{ex}\label{ex:5.4}{\rm
For $q=11$, we provide an explicit construction of an optimal [12,2t,14-3t]-\LRC with locality $2$ for $1\le t\le 4.$
First of all, we choose a  quadratic primitive polynomial $x^2-4x+2$ in $\F_{11}[x]$.
Let $\Gs$ the an automorphism of $\F_{11}(x)$ sending $x$ to $1/(-2x+4)$.
Then the order of $\Gs$ is $12$. Let $a_k$ be the numbers defined in the proof of
Lemma \ref{subgroupofq+1}. Hence, they can be computed, i.e., $(a_0,a_1,\cdots,a_{8}, a_{9})=(0,1,4,3,4,-1,-1,-2,-6,2).$
%$(a_0,a_1,\cdots,a_{12}, a_{13})=(0,1,4,3,4,-1,-1,-2,-6,2,-2,-1,0,2).$
Let $\mG=\langle \Gs^4 \rangle$.  It follows that $F^{\mG}=\F_{11}(z)$, where $$z=x+\Gs^4(x)+\Gs^8(x)=x+\frac{5x+4}{3x-1}+\frac{4x-6}{x+2}=\frac{3x^3+x+3}{3x^2+5x-2}$$
from Equation \eqref{sigma_k}.
The infinity place $\infty$ of $z$, the zero places of $z-2$, $z-3$ and $z-4$ in $F^{\mG}$ are splitting completely into $\{P_{\infty}, P_4, P_9\}$, $\{P_2, P_5, P_6\}$, $\{P_1, P_3, P_{10}\}$ and $\{P_0, P_7, P_8\}$ in $F/F^{\mG}$ respectively.
Let $y=1/x$. Then $(y)_{\infty}=(x)_0=P_0$.
Put \[V:=\left\{\sum_{i=0}^{1}\left(\sum_{j=0}^{t-1}a_{ij}z^j\right)y^i:\; a_{ij}\in\F_{11}\right\}.\]
Then the algebraic geometry code
$$\begin{array}{ll}
C(\mP,V)=\Big{\{} \Big{(}(z^{-3}f)(P_{\infty}),(z^{-3}f)(P_{4}),(z^{-3}f)(P_{9}),(y^{-1}f)(P_{0}), f(P_{7}),f(P_{8}),\\    \quad \ f(P_{2}), f(P_{5}), f(P_{6}), f(P_{1}),f(P_{3}),f(P_{10})\Big{)}:\; f\in V\Big{\}}.
\end{array}$$
is an optimal $[12,2t,14-3t]$-\LRC with locality  $2$ for $1\le t\le 4$.
Furthermore, a generator matrix of an optimal $[12,8,2]$-\LRC can be computed as follows:
$$\left(\begin{array}{cccccccccccc}
0 & 0 & 0 & 0 & 1 & 1 & 1 & 1 & 1 & 1 & 1 & 1 \\
0 & 0 & 0 & 1 & 8 & 7 & 6 & 9 & 2 & 1 & 4 & 10 \\
0 & 0 & 0 & 0 & 4 & 4 & 2 & 2 & 2 & 3 & 3 & 3 \\
0 & 0 & 0 & 4 & 10 & 6 & 1 & 7 & 4 & 3 & 1 & 8 \\
0 & 0 & 0 & 0 & 5 & 5 & 4 & 4 & 4 & 9 & 9 & 9 \\
0 & 0 & 0 & 5 & 7 & 2 & 2 & 3 & 8 & 9 & 3 & 2 \\
1 & 1 & 1 & 0 & 9 & 9 & 8 & 8 & 8 & 5 & 5 & 5 \\
0 & 3 & 5 & 9 & 6 & 8 & 4 & 6 & 5 & 5 & 9 & 6
\end{array}\right).$$
}\end{ex}

\begin{ex} {\rm Let $q=27$. Then by Theorem \ref{q+1}, one can explicitly construct optimal $27$-ary \LRCs with the following parameters
\begin{itemize}
\item[{\rm (i)}]  $[28,t,30-2t]$-\LRCs with locality $1$ for any $1\le t\le 14$;
\item[{\rm (ii)}] $[28,3t,30-4t]$-\LRCs with locality $3$ for any $1\le t\le 7$;
\item[{\rm (iii)}]   $[28,6t,30-7t]$-\LRCs with locality $6$ for any $1\le t\le 4$;
\item[{\rm (iv)}]  $[28,13t,30-14t]$-\LRCs with locality $13$ for any $1\le t\le 2$.
%\item[{\rm (v)}]  $[28,27,2]$-\LRC with locality $27$.
\end{itemize}
}\end{ex}

\begin{ex} {\rm Let $q=64$. Then by Theorem \ref{q+1}, one can explicitly construct optimal $64$-ary \LRCs with the following parameters
\begin{itemize}
\item[{\rm (i)}]  $[65,4t,67-5t]$-\LRCs with locality $4$ for any $1\le t\le 13$;
\item[{\rm (ii)}]  $[65,12t,67-13t]$-\LRCs with locality $12$ for any $1\le t\le 5$.
%\item[{\rm (iii)}]   $[65,64,2]$-\LRC with locality $64$.
\end{itemize}
}\end{ex}

\section{Explicit construction via dihedral subgroups}
Let $F$ be the rational function field $\F_q(x)$.
In this section, we give an explicit construction of optimal \LRCs from a subgroup $\mG$ of $\Aut(F/\F_q)$ that is isomorphic to some dihedral subgroup $D_{2h}$ with $h\ge 2$.
For odd $q$, the order $2h$ of $D_{2h}$ is a divisor of $q-1$ or $q+1$. Thus, such localities $r$ have been obtained in Sections 4 and 5 already. In this section, we only consider even $q$. From the Hurwitz genus formula, there are $2+h$ rational places of $F$ which are ramified in $F/F^{\mG}$ (see \cite[Theorem 11.91(iii)]{HKT08}).

\subsection{$h$ is a divisor of $q-1$}
In this subsection, let $h\ge 2$ be a positive divisor  of $q-1$.
Let $\Gs$ be the automorphism of $\F_q(x)$ determined by $\Gs(x)=ax$ for $a\in \F_q^*$ with $\text{ord}(a)=h$ and
let $ \Gt$ be the automorphism of $\F_q(x)$ determined by $\Gt(x)=1/x$.
Let $\mG$ be the subgroup of $\Aut(F/\F_q)$ generated by $\Gs$ and $\Gt$.
Then  it is easy to verify that $\mG=\langle \Gs, \Gt \rangle =\langle \Gs, \Gt| \Gs^h=1, \Gt^2=1,  \Gt \Gs=\Gs^{-1} \Gt \rangle \cong D_{2h}$.

\begin{theorem}\label{D_2m}
Let $h\ge 2$ be a positive divisor  of $q-1$ for even $q$.
Put $r=2h-1$ and  $n=(r+1)m$ for any positive integer $m \le \frac{q-1-h}{2h}$.
Then for any integer $t$ with $1\le t \le m$, there exists an optimal $[n,k,d]_q$-locally repairable code with locality $r$, $k=rt$ and $d=n-k-\frac{k}{r}+2$.
\end{theorem}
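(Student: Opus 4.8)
The plan is to imitate the proof of Proposition~\ref{prop:3.1}, exactly as was done for Theorems~\ref{thm:4.3} and~\ref{thm:4.4}, using the dihedral group $\mG=\langle\Gs,\Gt\rangle\cong D_{2h}$ and its fixed field. First I would pin down $F^{\mG}$ explicitly. Since $\mathrm{ord}(a)=h$, the function $z:=x^{h}+x^{-h}=(x^{2h}+1)/x^{h}$ satisfies $\Gs(z)=a^{h}x^{h}+a^{-h}x^{-h}=z$ and $\Gt(z)=x^{-h}+x^{h}=z$, so $z\in F^{\mG}$; as an element of $F$ it has a pole of order $h$ at each of the places $x=0$ and $x=\infty$, so $\deg(z)_{\infty}=2h$ and $[F:\F_q(z)]=2h$ by \cite[Theorem~1.4.10]{St09}. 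Since $\F_q(z)\subseteq F^{\mG}$ and $[F:F^{\mG}]=|\mG|=2h$, this forces $F^{\mG}=\F_q(z)$, again a rational function field of genus $0$.

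The key step, and the only one that really uses $q$ even, is the ramification analysis of $F/F^{\mG}$. The element $\Gs(x)=ax$ fixes precisely the places $x=0$ and $x=\infty$, which form a single $\mG$-orbit of size $2$ with stabilizer $\langle\Gs\rangle$; as $\gcd(h,q)=1$ they are tamely ramified with $e=h$ and different exponent $h-1$. Each of the $h$ involutions $\Gs^{i}\Gt$ corresponds to a matrix of trace $0$ and determinant $a^{i}\neq0$, whose characteristic polynomial $T^{2}-a^{i}$ is a perfect square in characteristic $2$; hence $\Gs^{i}\Gt$ is conjugate in $\PGL_2(q)$ to a nontrivial unipotent element and fixes a \emph{unique} place of the projective line, which one checks is one of the $h$-th roots of unity $x=a^{j}$. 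Thus these $h$ rational places form one $\mG$-orbit, each wildly ramified with $e=2$, hence with different exponent at least $2$. Since $g(F)=g(F^{\mG})=0$, the Hurwitz genus formula gives $\deg\mathrm{Diff}(F/F^{\mG})=4h-2$, which is already attained by the two tame and the $h$ wild places ($2(h-1)+2h=4h-2$); so these $h+2$ rational places are exactly the places of $F$ ramified in $F/F^{\mG}$, each wild place having different exponent exactly $2$ (in agreement with \cite[Theorem~11.91(iii)]{HKT08}). Consequently $q+1-(h+2)=q-1-h$ rational places of $F$ are unramified, and since $F/F^{\mG}$ is Galois each of them lies over a rational place of $F^{\mG}$ that splits completely; hence $F^{\mG}$ has exactly $(q-1-h)/(2h)$ rational places splitting completely in $F/F^{\mG}$, an integer because $(q-1)/h$ is odd.

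With the ramification settled, the construction is routine. Fix $m$ completely split rational places $Q_1,\dots,Q_m$ of $F^{\mG}$, let $\mP$ be the set of $n=2hm=(r+1)m$ rational places of $F$ above them (none equal to $x=0$ or $x=\infty$, which are ramified), keep $z=x^{h}+x^{-h}$ (its pole in $F^{\mG}$ is ramified, hence not among the $Q_i$, so $\supp((z)_{\infty})\cap\mP=\emptyset$ and $\deg(z)_{\infty}=r+1$ as a divisor of $F$), pick $x\in F$ with $F=\F_q(x)$ and $(x)_{\infty}=P_{\infty}\notin\mP$, and with $r=2h-1$ put
\[
V:=\Big\{\sum_{i=0}^{r-1}\Big(\sum_{j=0}^{t-1}a_{ij}z^{j}\Big)x^{i}:\ a_{ij}\in\F_q\Big\}.
\]
Because $r<2h=[F:\F_q(z)]$, the powers $1,x,\dots,x^{r-1}$ are $\F_q(z)$-linearly independent, so $\dim_{\F_q}V=rt=k$; a nonzero $f\in V$ has pole divisor bounded by $(t-1)(z)_{\infty}+(r-1)(x)_{\infty}$, of degree $(r+1)t-2=k+\tfrac{k}{r}-2$, whence $C(\mP,V)$ has length $n$, dimension $k$, and minimum distance at least $n-k-\tfrac{k}{r}+2$; and locality $r$ follows verbatim from Proposition~\ref{prop:3.1}, since $z$ is constant on each fibre of $r+1$ places, so $f$ restricts there to a polynomial of degree $\le r-1$ in $x$, while $x$ separates the points of a fibre (each $x-c$ having a single simple zero), so an erased symbol is recovered by Lagrange interpolation from the remaining $r$. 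The Singleton-type bound~\eqref{eq:x1} then forces $d=n-k-\tfrac{k}{r}+2$, and $C(\mP,V)$ is the desired optimal code. The one real obstacle is the ramification count in the second step: obtaining $h+2$ ramified rational places rather than $2h+2$ hinges on the characteristic-$2$ fact that the reflections $\Gs^{i}\Gt$ are unipotent with a single, wildly ramified, fixed place, and this is precisely what pins down the range $m\le(q-1-h)/(2h)$.
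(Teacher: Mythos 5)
Your proposal is correct and takes essentially the same route as the paper: identify $F^{\mG}=\F_q(z)$ with $z=x^h+x^{-h}$, count the ramified rational places in $F/F^{\mG}$, and then run the $V$-construction and locality argument verbatim from Proposition~\ref{prop:3.1}. The only real difference is in presentation: the paper simply cites \cite[Theorem 11.91(iii)]{HKT08} for the fact that exactly $h+2$ rational places of $F$ ramify, whereas you derive this from scratch (the two tame poles of $z$ at $x=0,\infty$ with $e=h$, and the unique wild fixed place of each of the $h$ involutions $\Gs^i\Gt$, which in characteristic $2$ are unipotent with $e=2$ and different exponent forced to equal $2$ by Hurwitz). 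That explicit calculation is a welcome elaboration and correctly explains where the range $m\le (q-1-h)/(2h)$ comes from, but it does not change the structure of the argument.
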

\begin{proof} Let $F$ be the rational function field $\F_q(x)$.
Let $\mG=\langle \Gs, \Gt \rangle$  be the subgroup of $\Aut(F/\F_q)$ of order $r+1=2h$ that is isomorphic to the dihedral subgroup $D_{2h}$, where $\Gs, \Gt$ are defined in this subsection.
 Let $P_{\infty}$ be the unique pole place of $x$, i.e., $(x)_{\infty}=P_{\infty}$.
It is easy to verify that $z=x^h+x^{-h}$ is an element of $F^{\mG}$.  Moreover, the pole divisor of $z$ in $F$ is $(z)_{\infty}=hP_{\infty}+hP_0$.  Hence, $F^{\mG}=\F_q(z)$.
For $t\ge 1$, consider the set of functions
\[V:=\left\{\sum_{i=0}^{r-1}\left(\sum_{j=0}^{t-1}a_{ij}z^j\right)x^i:\; a_{ij}\in\F_q\right\}.\]
The rest of the proof is similar as that of Proposition \ref{prop:3.1}.
\end{proof}

%\begin{rmk}
%For odd $q$, the order $r+1$ of any dihedral group $D_{2h}$ with $h|\frac{q-1}{2}$ can be obtained from the subgroups of a cyclic group of order $q-1$ as well. Moreover, the length of the optimal \LRCs constructed from the subgroups of a cyclic group of order $q-1$ can be larger than the one constructed from the dihedral groups from Theorem \ref{thm:4.4}.
%\end{rmk}

\begin{ex}{\rm
Let $q=16$ and $\F_{16}=\F_2(\Ga)$ with $\Ga^4+\Ga+1=0$.
We give an explicit construction of an optimal [12,5t,14-6t]-\LRC with locality $5$ for $1\le t\le 2.$
Let $F$ be the rational function field $\F_{16}(x)$,
let $\Gs(x)=\Ga^5 x$ be an automorphism of $\F_{16}(x)$ of order $3$ and let  $\mG=\langle \Gs, \Gt \rangle$ be the dihedral subgroup $D_6$ of $\Aut(F/\F_{16})$.
It follows that $F^{\mG}=\F_{16}(z)$ where $z=x^3+x^{-3}$.
It is easy to verify that the infinity place $\infty$ of $F^{\mG}$ splits into two places $\{P_{\infty}, P_0\}$ in $F$ with ramification index $3$, the zero place of $z$ of $F^{\mG}$ splits into three places  $\{P_1, P_{\Ga^5}, P_{\Ga^{10}}\}$ in $F$ with ramification index $2$,  the zero places of $z-(\Ga^2+\Ga+1)$ and $z-(\Ga^2+\Ga)$  of $F^{\mG}$  split completely in $F$. In particular, $(z-\Ga^2-\Ga-1)_{0} =P_{\Ga}+P_{\Ga^4}+P_{\Ga^6}+P_{\Ga^9}+P_{\Ga^{11}}+P_{\Ga^{14}}$ and
$(z-\Ga^2-\Ga)_{0} =P_{\Ga^2}+P_{\Ga^3}+P_{\Ga^7}+P_{\Ga^8}+P_{\Ga^{12}}+P_{\Ga^{13}}.$
Put \[V:=\left\{\sum_{i=0}^{4}\left(\sum_{j=0}^{t-1}a_{ij}z^j\right)x^i:\; a_{ij}\in\F_{16}\right\}.\]
Then the algebraic geometry code
$$
\begin{array}{lll}
C(\mP,V)=&\Big{\{} \Big{(} f(P_{\Ga}), f(P_{\Ga^4}), f(P_{\Ga^6}),f(P_{\Ga^9}),f(P_{\Ga^{11}}), f(P_{\Ga^{14}}),\\  & \quad f(P_{\Ga^2}), f(P_{\Ga^3}), f(P_{\Ga^7}),f(P_{\Ga^8}),f(P_{\Ga^{12}}),f(P_{\Ga^{13}})\Big{)}:\; f\in V\Big{\}}.
\end{array}
$$
is an optimal  [12,5t,14-6t]-\LRC with locality $5$ for $1\le t\le 2.$ Similar to Example \ref{ex:5.4}, a generator matrix of such a code can be obtained as well. We skip the details.
}\end{ex}

%\begin{ex} {\rm Let $q=81$. For each divisor $h\ge 2$ of $40$, then we have
%\begin{itemize}
%\item[{\rm (i)}] by Theorem \ref{D_2m}, there exists an optimal $81$-ary $[76,3t,78-4t]$-LRC with locality $3$ for any $1\le t\le 19$.
%\item[{\rm (ii)}] by Theorem \ref{D_2m}, there exists an optimal $81$-ary $[72,7t,74-8t]$-LRC with locality $7$ for any $1\le t\le 9$.
%\item[{\rm (iii)}] by Theorem \ref{D_2m}, there exists an optimal $81$-ary $[70,9t,72-10t]$-LRC with locality $9$ for any $1\le t\le 7$.
%\item[{\rm (iv)}] by Theorem \ref{D_2m}, there exists an optimal $81$-ary $[64,15t,66-16t]$-LRC with locality $15$ for any $1\le t\le 4$.
%\item[{\rm (v)}] by Theorem \ref{D_2m}, there exists an optimal $81$-ary $[60,19t,62-20t]$-LRC with locality $19$ for any $1\le t\le 3$.
%\item[{\rm (vi)}] by Theorem \ref{D_2m}, there exists an optimal $81$-ary $[40,39,2]$-LRC with locality $39$.
%\end{itemize}
%}\end{ex}

\begin{ex} {\rm Let $q=64$. Then
\begin{itemize}
\item[{\rm (i)}] by  Theorem  \ref{D_2m}, there exists an optimal $64$-ary $[60,5t,62-6t]$-\LRC with locality $5$ for any $1\le t\le 10$;
\item[{\rm (ii)}] by  Theorem \ref{D_2m}, there exists an optimal $64$-ary $[56,13t,58-14t]$-\LRC with locality $13$ for any $1\le t\le 4$;
\item[{\rm (iii)}] by  Theorem  \ref{D_2m}, there exists an optimal $64$-ary $[54,17t,56-18t]$-\LRC with locality $17$ for any $1\le t\le 3$.
%\item[{\rm (iv)}] by  Theorem \ref{D_2m}, there exists an optimal $64$-ary $[42,41,2]$-\LRC with locality $41$.
\end{itemize}
}\end{ex}

\subsection{$h$ is a divisor of $q+1$}
Let $h\ge 2$ be a positive divisor of $q+1$ in this subsection.
%For odd $q$, the order $r+1$ of any dihedral group $D_{2h}$ with $h|\frac{q+1}{2}$ can be obtained from the subgroups of a cyclic group of order $q+1$ as well. Moreover, the length of the optimal \LRCs can be as large as $q+1$ from the subgroups of a cyclic group of order $q+1$. Hence, we don't need to provide the explicit construction for this case.
%But for even $q$, it is possible to find  optimal \LRCs with new locality $r$ from the dihedral groups $D_{2h}$ with $h|(q+1)$
%compared with the subgroups of a cyclic group of order $q+1$.

\begin{theorem} \label{D_2m+}
Let $r$ be a positive integer such that $r+1=2h$, where $h\ge 2$ is a positive divisor $q+1$. Put $n=m(r+1)$ for any positive integer $m\le \frac{q-1-h}{r+1}$. Then for any integer $t$ with  $1\le t\le m$, there exists an optimal $q$-ary $[n,k,d]$-\LRC with locality $r$,  $k=rt$ and $d=n-k-\frac kr+2$.
\end{theorem}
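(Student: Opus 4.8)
The plan is to imitate Proposition~\ref{prop:3.1} and Theorem~\ref{D_2m}, the only new ingredient being an explicit dihedral subgroup $\mG\cong D_{2h}$ of $\Aut(F/\F_q)=\PGL_2(q)$ whose cyclic part lies in the \emph{non-split} torus of order $q+1$, together with the ramification data of $F/F^{\mG}$. Following Lemma~\ref{subgroupofq+1}, I would take a quadratic primitive polynomial $f$ whose companion matrix generates a cyclic group of order $q+1$ in $\PGL_2(q)$, and let $\Gs$ be the automorphism attached to the $(q+1)/h$-th power of that generator, so that $\langle\Gs\rangle$ is cyclic of order $h$ and, diagonalised over $\F_{q^2}$, $\Gs$ has as its two fixed points the conjugate pair underlying the unique degree-$2$ place $Q$ of $F$ fixed by the non-split torus. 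A short computation (conjugating the matrix $\left(\begin{smallmatrix}0&1\\1&0\end{smallmatrix}\right)$ back by the matrix $P$ of Lemma~\ref{subgroupofq+1}, exactly as in its proof) produces an $\F_q$-rational involution $\Gt$ with $\Gt\Gs\Gt^{-1}=\Gs^{-1}$; then $\mG:=\langle\Gs,\Gt\rangle\cong D_{2h}$ has order $r+1=2h$. Since $F^{\mG}$ is again a rational function field, I pick $z\in F^{\mG}$ with $(z)_\infty$ equal to the degree-$1$ place $Q'$ of $F^{\mG}$ lying under $Q$; then $\F_q(z)=F^{\mG}$ and, as a divisor of $F$, $(z)_\infty=hQ$ has degree $2h=|\mG|$.

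Next I would pin down the ramification of $F/F^{\mG}$ using Proposition~\ref{prop:2.3}. The place $Q$ is ramified with inertia group exactly $\langle\Gs\rangle$ (the involutions of $\mG$ interchange the two geometric points of $Q$, hence act nontrivially on its residue field), so $e_Q=h$; since $q$ is even while $h\mid q+1$ is odd, $Q$ is tame with $d_Q=h-1$. Each of the $h$ involutions $\Gt\Gs^i$ is unipotent, hence has a single fixed point on $\PP^1$, which is $\F_q$-rational because an involution permutes the odd number $q+1$ of rational places with a fixed point; these $h$ rational places are pairwise distinct and form a single $\mG$-orbit (stabiliser $\langle\Gt\Gs^i\rangle$ of order $2$), at which $F/F^{\mG}$ is wildly Artin--Schreier ramified with $e=2$; no other place of $F$ ramifies. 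Feeding this into the Hurwitz genus formula with $g(F)=g(F^{\mG})=0$ gives $\deg\text{Diff}(F/F^{\mG})=4h-2$, and subtracting the tame contribution $2(h-1)$ of $Q$ leaves exactly $2h=h\cdot 2$ for the wild orbit, so the different exponent there is forced to be $2$. Hence $q+1-h$ rational places of $F$ split completely, more than the $n=m(r+1)\le q-1-h$ that is needed; I take $\mP$ to be $n$ of them lying over $m$ rational places of $F^{\mG}$.

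Finally I would run the argument of Proposition~\ref{prop:3.1}. Choose $x\in F$ with a single simple pole at one of the $h$ involution-fixed rational places, so $\deg(x)_\infty=1$, $F=F^{\mG}(x)$ (its stabiliser in $\mG$ is trivial, as $\Gt(x)\ne x$), and this pole lies outside $\mP$. With
\[
V:=\Big\{\sum_{i=0}^{r-1}\Big(\sum_{j=0}^{t-1}a_{ij}z^j\Big)x^i:\ a_{ij}\in\F_q\Big\},
\]
which has $\F_q$-dimension $rt$ because $\{1,x,\dots,x^{r}\}$ is an $F^{\mG}$-basis of $F$, the code $C(\mP,V)$ is the desired one: every nonzero $f\in V$ has pole divisor of degree at most $(t-1)\cdot 2h+(r-1)=(r+1)t-2=k+t-2$, giving dimension $k=rt$ and minimum distance $d\ge n-k-\tfrac kr+2$, which matches the Singleton-type bound \eqref{eq:x1}; and locality $r$ holds because $z$ is constant on each fibre $A_\ell$ of size $r+1$ over a place of $F^{\mG}$, so on $A_\ell$ the function $f$ agrees with a polynomial in $x$ of degree $\le r-1$, and since $x$ is injective on $A_\ell$ any $r$ of the $r+1$ coordinates recover the erased one by Lagrange interpolation.

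The step I expect to be the main obstacle is the bookkeeping of the wild ramification at the involution-fixed places: that there are exactly $h$ of them, that they constitute one $\mG$-orbit, and that the different exponent is exactly $2$. I would settle it by combining the transitivity of $\mG$ on each fibre with the Hurwitz genus formula, which, given the known tame contribution of $Q$, leaves no room for any other configuration; writing down an explicit Artin--Schreier model of $F/F^{\langle\Gt\rangle}$ is available as a fallback if a direct verification of $d_R=2$ is wanted.
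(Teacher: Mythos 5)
Your proposal is correct, and it follows the same skeleton as the paper's proof: take a dihedral subgroup $\mG\cong D_{2h}$ whose cyclic part of order $h$ sits inside the cyclic group of order $q+1$ from Lemma~\ref{subgroupofq+1}, identify $F^{\mG}=\F_q(z)$ with $\deg(z)_{\infty}=2h=[F:F^{\mG}]$ as a divisor of $F$, evaluate $V=\{\sum_{i=0}^{r-1}(\sum_{j=0}^{t-1}a_{ij}z^j)x^i\}$ at rational places splitting completely in $F/F^{\mG}$, and then repeat the argument of Proposition~\ref{prop:3.1}. The differences are in the execution, and they mostly work in your favour. The paper takes $\Gt(x)=1/x$ and asserts $\langle\Gs^{(q+1)/h},\Gt\rangle\cong D_{2h}$; strictly speaking this requires $\Gt$ to normalize $\langle A_f\rangle$, which for the companion matrix of a primitive $x^2+ax+b$ forces $b=1$, so with the normalization of Lemma~\ref{subgroupofq+1} this needs adjusting, whereas your involution $P\left(\begin{smallmatrix}0&1\\1&0\end{smallmatrix}\right)P^{-1}=\left(\begin{smallmatrix}1&0\\a&1\end{smallmatrix}\right)$ (in characteristic $2$) is $\F_q$-rational and genuinely inverts the torus, so it handles exactly the point the paper glosses over. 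Second, the paper exhibits $z$ explicitly as $\mu(x)\mu(x^{-1})$, while you take $z$ abstractly with $(z)_{\infty}$ the degree-one place of $F^{\mG}$ under the degree-$2$ place $Q$; both give $\deg(z)_{\infty}=2h$ and hence $F^{\mG}=\F_q(z)$, the paper's choice being constructive and yours shorter. Third, your ramification bookkeeping (inertia $\langle\Gs\rangle$ at $Q$, tame since $h\mid q+1$ is odd; the $h$ distinct rational fixed points of the involutions wildly ramified with $e=2$ and $d=2$ forced by Hurwitz; nothing else ramified) is precisely what justifies having at least $(q+1-h)/(2h)\ge m$ completely splitting rational places, i.e.\ length up to $q-1-h$ rather than the weaker generic bound of Proposition~\ref{prop:3.1}; the paper delegates this count to the citation of \cite[Theorem 11.91]{HKT08} at the start of Section 6 and to ``following Proposition~\ref{prop:3.1}''. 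One small quibble: $F=F^{\mG}(x)$ holds trivially because $F=\F_q(x)$, so the stabiliser remark about $\Gt(x)\neq x$ is unnecessary; also note the poles of $z$ and $x$ (the place $Q$ and a ramified rational place) automatically avoid $\mP$, so the unmodified evaluation code of Section~\ref{AGcode} suffices, as you use.
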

\begin{proof}
Let $F$ be the rational function field $\F_q(x)$ and let $P_{\infty}$ be the infinite place of $F$, i.e., $(x)_{\infty}=P_{\infty}$.
Let $\Gs$ be the automorphism of $\F_q(x)$ of order $q+1$ defined in Lemma \ref{subgroupofq+1}.
Then $\mG:=\langle \Gs^{\frac{q+1}{h}}, \Gt \rangle$ is isomorphic to the dihedral subgroup $D_{2h}$ of $\Aut(F/\F_q)$, where $\Gt$ is given by $\Gt(x)=\frac1x$.
%Then $\langle \Gs^{\frac{q+1}{h}}, \Gt \rangle$ is a dihedral subgroup of $\Aut(F/\F_q)$ which is isomorphic to  $D_{2h}$.
%Let $\mG$ be the dihedral subgroup $D_{2h}\cong \langle \Gs^{\frac{q+1}{h}}, \Gt \rangle$ of $\Aut(F/\F_q)$.
Put $$\mu(x)=\sum_{\rho \in \langle \Gs^{\frac{q+1}{h}} \rangle}\rho(x).$$
It is easy to verify that $z=\mu(x)\cdot\mu(x^{-1})$ is an element of $F^{\mG}$.  Moreover, the degree of the pole divisor of $z$ in $F$ is $\deg (z)_{\infty}=2h$.  Hence, $F^{\mG}=\F_q(z)$.
For $t\ge 1$, consider the set of functions
\[V:=\left\{\sum_{i=0}^{r-1}\left(\sum_{j=0}^{t-1}a_{ij}z^j\right)x^i:\; a_{ij}\in\F_q\right\}.\]
Similarly, the proof can be completed by following
that of Proposition \ref{prop:3.1}.
\end{proof}

\begin{ex} {\rm Let $q=64$. For each divisor $h\ge 2$ of $65$, then
\begin{itemize}
\item[{\rm (i)}] by  Theorem  \ref{D_2m+}, there exists an optimal $64$-ary $[50,9t,52-10t]$-\LRC with locality $9$ for any $1\le t\le 5$;
\item[{\rm (ii)}] by  Theorem \ref{D_2m+}, there exists an optimal $64$-ary $[26,25,2]$-\LRC with locality $25$.
\end{itemize}
}\end{ex}

\section{Conclusion}
In this paper, we show that as long as there is a subgroup $\mG$ of $\PGL_2(q)$ of order $r+1$, one can construct an optimal \LRC with locality $r$. These optimal \LRCs can be explicitly constructed as all subgroups of  $\PGL_2(q)$ have explicit structures. In this paper we provide explicit construction of optimal \LRCs only for several subgroups such as affine subgroups, subgroups of a cyclic group of order $q+1$ and dihedral subgroups. We also provide a general construction by estimating the number of ramified places regardless of subgroup structures.

It is clear that the construction given in this paper can be generalized to arbitrary function fields. We are currently working on the construction of \LRCs from Hermitian function fields and asymptotically optimal towers.

\end{document}